\title{Efficient Regret Minimization Algorithm for Extensive-Form Correlated Equilibrium\thanks{This paper was accepted for publication at NeurIPS 2019.}}
\author{Gabriele Farina\\
Computer Science Department\\
Carnegie Mellon University\\
\texttt{gfarina@cs.cmu.edu}
\And Chun Kai Ling\\
Computer Science Department\\
Carnegie Mellon University\\
\texttt{chunkail@cs.cmu.edu}
\And Fei Fang\\
Institute for Software Research\\
Carnegie Mellon University\\
\texttt{feif@cs.cmu.edu}
\And Tuomas Sandholm\\
Computer Science Department, CMU\\
Strategic Machine, Inc.\\
Strategy Robot, Inc.\\
Optimized Markets, Inc.\\
\texttt{sandholm@cs.cmu.edu}
}
	\newtheorem{definition}{Definition}[]
	\newtheorem{lemma}{Lemma}[]
	\newtheorem{proposition}{Proposition}[]
	\newtheorem{remark}{Remark}[]
	\newtheorem{theorem}{Theorem}[]
\DeclareMathOperator*{\ext}{\,\triangleleft\,}
\newcommand{\cX}{\mathcal{X}}
\newcommand{\cY}{\mathcal{Y}}
\newcommand{\cZ}{\mathcal{Z}}
\newcommand{\bbR}{\mathbb{R}}
\newcommand{\defeq}{\mathrel{:\mkern-0.25mu=}}
\newcommand{\symp}[1]{\Delta^{\!#1}}
\renewcommand{\vec}[1]{\bm{#1}}
\newcommand{\mat}[1]{\bm{#1}}
\newcommand{\regm}[1]{RM$_{\mathcal{#1}}$}
\newcommand{\rele}{\mathrel{\triangleright\mkern-1.7mu\triangleleft}}
\newcommand{\emptyseq}{\varnothing}
\newcommand*\circled[1]{\tikz[baseline=(char.base)]{
            \node[shape=circle,draw,inner sep=.5pt] (char) {\fontsize{8pt}{9pt}\selectfont{#1}};}}
\newcommand{\conn}{\rightleftharpoons}
\newcommand{\xileaf}[2]{\xi_i(#1; #2)}
\newcommand{\seqf}[1]{Q_{#1}}
\DeclareMathOperator*{\argmin}{argmin}
\g@addto@macro \normalsize {%
 \addtolength\abovedisplayskip{-2pt}%
 \addtolength\belowdisplayskip{-2pt}%
}
\begin{document}
    \maketitle
    \begin{abstract}
      Self-play methods based on regret
      minimization have become the state of the art for computing
      Nash equilibria in large two-players zero-sum
      extensive-form games. These methods fundamentally rely on the
      hierarchical structure of the players' sequential strategy spaces to
      construct a regret minimizer that recursively minimizes regret at
      each decision point in the game tree. In this paper, we introduce the
      first efficient regret minimization algorithm for computing 
      extensive-form
      \emph{correlated} equilibria in large two-player \emph{general-sum} 
      games with no
      chance moves. Designing such an algorithm is significantly more 
      challenging than designing one for the Nash
      equilibrium counterpart, as the constraints that define the
      space of correlation plans lack the hierarchical structure and might even form cycles. We show that some of the constraints are redundant and can be excluded from consideration, and present an efficient algorithm that generates the space of extensive-form correlation plans incrementally from the remaining constraints. This structural decomposition is achieved via a special
      convexity-preserving operation that we coin \emph{scaled extension}.
      We show that a regret minimizer can be designed for a scaled extension of any two convex sets, and that from the decomposition we then obtain a global regret minimizer.
      Our algorithm produces feasible iterates. Experiments show that it significantly outperforms prior approaches and for larger problems it is the only viable option.
    \end{abstract}

    \section{Introduction}

In recent years, self-play methods based on regret minimization, such as counterfactual regret minimization (CFR)~\citep{Zinkevich07:Regret} and its faster variants~\citep{Tammelin15:Solving,Brown17:Dynamic,Brown19:Solving} have emerged as powerful tools for computing Nash equilibria in large extensive-form games, and have been instrumental in several recent milestones in poker~\citep{Bowling15:Heads,Brown17:Safe,Brown17:Superhuman,Moravvcik17:DeepStack,Brown19:Superhuman}. These methods exploit the hierarchical structure of the sequential strategy spaces of the players to construct a regret minimizer that recursively minimizes regret {locally} at each decision point in the game tree. This has inspired regret-based algorithms for other solution concepts in game theory, such as  extensive-form perfect equilibria~\citep{Farina17:Regret}, Nash equilibrium with strategy constraints~\citep{Farina17:Regret,Farina19:Online,Farina19:Regret,Davis19:Solving}, and quantal-response equilibrium~\citep{Farina19:Online}.

In this paper, we give the first {efficient} regret-based algorithm for finding an \textit{extensive-form correlated equilibrium (EFCE)}~\citep{Stengel08:Extensive} in two-player general-sum games with no chance moves.
EFCE is a natural extension of the correlated equilibrium (CE) solution concept to the setting of extensive-form games. Here, the strategic interaction of rational players is complemented by a \emph{mediator} that privately recommends behavior, but does not \emph{enforce it}: it is up to the mediator to make recommendations that the players are incentivized to follow.
Designing a regret minimization algorithm that can efficiently search over the space of extensive-form correlated strategies (known as \emph{correlation plans}) is significantly more difficult than designing one for Nash equilibrium. This is because the constraints that define the space of correlation plans lack the hierarchical structure of sequential strategy spaces and might even form cycles. Existing general-purpose regret minimization algorithms, such as follow-the-regularized-leader~\citep{Shalev07:Primal} and mirror descent, as well as those proposed by~\citet{Gordon08:No} in the context of convex games, are not practical: they require the evaluation of proximal operators (generalized projections problems) or the minimization of linear functions on the space of extensive-form correlation plans. In the former case, no distance-generating function is known that can be minimized efficiently over this space, while in the latter case current linear programming technology does not scale to large games, as we show in the experimental section of this paper. The regret minimization algorithm we present in this paper computes the next iterate in \emph{linear} time in the dimension of the space of correlation plans.

We show that some of the constraints that define the polytope of correlation plans are redundant and can be eliminated, and present an efficient algorithm that generates the space of correlation plans incrementally from the remaining constraints. This structural decomposition is achieved via a special convexity-preserving operation that we coin \emph{scaled extension}.
We show that a regret minimizer can be designed for a scaled extension of any two convex sets, and that from the decomposition we then obtain a global regret minimizer.
Experiments show that our algorithm significantly outperforms prior approaches---the LP-based approach~\citep{Stengel08:Extensive} and a very recent subgradient descent algorithm~\citep{Farina19:Correlation}---and for larger problems it is the only viable option.

    \vspace{-1mm}
\section{Preliminaries}\label{sec:efg}
\vspace{-2mm}

Extensive-form games (EFGs) are played on a game tree. 
Each node in the game tree belongs to a player, who acts at that node; for the purpose of this paper, we focus on two-player games only. Edges leaving a node correspond to actions that can be taken at that node. In order to capture private information, the game tree is supplemented with \emph{information sets}. Each node belongs to exactly one information set, and each information set is a nonempty set of tree nodes for the same Player $i$, which are the set of nodes that Player $i$ cannot distinguish among, given what they have observed so far. We will focus on \emph{perfect-recall} EFGs, that is, EFGs where no player forgets what the player knew earlier. We denote by $\mathcal{I}_1$ and $\mathcal{I}_2$ the sets of all information sets that belong to Player 1 and 2, respectively. All nodes that belong to an information set $I \in \mathcal{I}_1 \cup \mathcal{I}_2$ share the same set of available actions (otherwise the player acting at those nodes would be able to distinguish among them);
we denote by $A_I$ the set of actions available at information set $I$. We define the set of \emph{sequences} of Player $i$ as the set $\Sigma_i \defeq \{(I, a) : I \in \mathcal{I}_i, a \in A_I\} \cup \{\emptyseq\}$,
where the special element $\emptyseq$ is called \emph{empty sequence}. Given an information set $I \in \mathcal{I}_i$, we denote by $\sigma(I)$ the \emph{parent sequence} of $I$, defined as the last pair $(I', a') \in \Sigma_i$
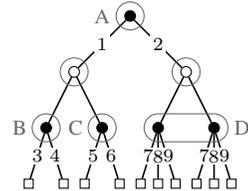
\begin{wrapfigure}{r}{3.4cm}\centering
\vspace{-2mm}
\def\done{.8}
\def\dtwo{.40}
\def\dleaf{.25}
\hfill\begin{tikzpicture}[baseline=0pt,scale=.93]
  \node[fill=black,draw=black,circle,inner sep=.5mm] (A) at (0, 0) {};
  \node[fill=white,draw=black,circle,inner sep=.5mm] (X) at ($(-\done,-.8)$) {};
  \node[fill=white,draw=black,circle,inner sep=.5mm] (Y) at ($(\done,-.8)$) {};
  \node[fill=black,draw=black,circle,inner sep=.5mm] (B) at ($(X) + (-\dtwo, -.8)$) {};
  \node[fill=black,draw=black,circle,inner sep=.5mm] (C) at ($(X) + (\dtwo, -.8)$) {};
  \node[fill=white,draw=black,inner sep=.6mm] (l1) at ($(B) + (-\dleaf, -.8)$) {};
  \node[fill=white,draw=black,inner sep=.6mm] (l2) at ($(B) + (\dleaf, -.8)$) {};
  \node[fill=white,draw=black,inner sep=.6mm] (l3) at ($(C) + (-\dleaf, -.8)$) {};
  \node[fill=white,draw=black,inner sep=.6mm] (l4) at ($(C) + (\dleaf, -.8)$) {};

  \node[fill=black,draw=black,circle,inner sep=.5mm] (D1) at ($(Y) + (-\dtwo, -.8)$) {};
  \node[fill=black,draw=black,circle,inner sep=.5mm] (D2) at ($(Y) + (\dtwo, -.8)$) {};
  \node[fill=white,draw=black,inner sep=.6mm] (l5) at ($(D1) + (-\dleaf, -.8)$) {};
  \node[fill=white,draw=black,inner sep=.6mm] (l6) at ($(D1) + (0, -.8)$) {};
  \node[fill=white,draw=black,inner sep=.6mm] (l7) at ($(D1) + (\dleaf, -.8)$) {};
  \node[fill=white,draw=black,inner sep=.6mm] (l8) at ($(D2) + (-\dleaf, -.8)$) {};
  \node[fill=white,draw=black,inner sep=.6mm] (l9) at ($(D2) + (0, -.8)$) {};
  \node[fill=white,draw=black,inner sep=.6mm] (l10) at ($(D2) + (\dleaf, -.8)$) {};

  \draw[semithick] (A) --node[fill=white,inner sep=.9] {\scriptsize$1$} (X);
  \draw[semithick] (A) --node[fill=white,inner sep=.9] {\scriptsize$2$} (Y);
  \draw[semithick] (B) --node[fill=white,inner sep=.9] {\scriptsize$3$} (l1);
  \draw[semithick] (B) --node[fill=white,inner sep=.9] {\scriptsize$4$} (l2);
  \draw[semithick] (C) --node[fill=white,inner sep=.9] {\scriptsize$5$} (l3);
  \draw[semithick] (C) --node[fill=white,inner sep=.9] {\scriptsize$6$} (l4);
  \draw[semithick] (D1) --node[fill=white,inner xsep=0,inner ysep=.9,xshift=-.4] {\scriptsize$7$} (l5);
  \draw[semithick] (D1) --node[fill=white,inner xsep=0,inner ysep=.9] {\scriptsize$8$} (l6);
  \draw[semithick] (D1) --node[fill=white,inner xsep=0,inner ysep=.9,xshift=.6] {\scriptsize$9$} (l7);
  \draw[semithick] (D2) --node[fill=white,inner xsep=0,inner ysep=.9,xshift=-.4] {\scriptsize$7$} (l8);
  \draw[semithick] (D2) --node[fill=white,inner xsep=0,inner ysep=.9] {\scriptsize$8$} (l9);
  \draw[semithick] (D2) --node[fill=white,inner xsep=0,inner ysep=.9,xshift=.6] {\scriptsize$9$} (l10);
  \draw[semithick] (B) -- (X) -- (C);
  \draw[semithick] (D1) -- (Y) -- (D2);

    \draw[black!60!white] (X) circle (.2);
    \draw[black!60!white] (Y) circle (.2);

  \draw[black!60!white] (A) circle (.2);
  \node[black!60!white]  at ($(A) + (-.4, 0)$) {\textsc{a}};

  \draw[black!60!white] (B) circle (.2);
  \node[black!60!white]  at ($(B) + (-.38, 0)$) {\textsc{b}};

  \draw[black!60!white] (C) circle (.2);
  \node[black!60!white]  at ($(C) + (-.38, 0)$) {\textsc{c}};

  \draw[black!60!white] ($(D1) + (0, .2)$) arc (90:270:.2);
  \draw[black!60!white] ($(D1) + (0, .2)$) -- ($(D2) + (0, .2)$);
  \draw[black!60!white] ($(D1) + (0, -.2)$) -- ($(D2) + (0, -.2)$);
  \draw[black!60!white] ($(D2) + (0, -.2)$) arc (-90:90:.2);
  \node[black!60!white]  at ($(D2) + (.4, 0)$) {\textsc{d}};
\end{tikzpicture}\hfill
\vspace{-1mm}
\caption{\small Small example.}
\label{fig:small sf}
\vspace{-5mm}
\end{wrapfigure}
encountered on the path from the root to any node $v \in I$; if no such pair exists (that is, Player $i$ never acts before any node $v\in I$), we let $\sigma(I) = \emptyseq$.
We (recursively) define a sequence $\tau\in\Sigma_i$ to be a \emph{descendent} of sequence $\tau'\in\Sigma_i$, denoted by $\tau \succeq \tau'$, if $\tau = \tau'$ or if $\tau=(I,a)$ and $\sigma(I) \succeq \tau'$. We use the notation $\tau \succ \tau'$ to mean $\tau\succeq\tau' \land \tau \neq \tau'$. \cref{fig:small sf} shows a small example EFG; black round nodes belong to Player 1, white round nodes belong to Player 2, action names are not shown, gray round sets define information sets, and the numbers along the edges define concise names for sequences (for example, `7' denotes sequence $(\textsc{d}, a)$ where $a$ is the leftmost action at \textsc{d}).

\textbf{Sequence-Form Strategies}\quad In the sequence-form representation~\citep{Romanovskii62:Reduction,Koller96:Efficient,Stengel96:Efficient}, a strategy for Player $i$ is compactly represented via a vector $\vec{x}$ indexed by sequences $\sigma \in \Sigma_i$. When $\sigma=(I,a)$, the entry ${x}[\sigma] \ge 0$ defines the product of the probabilities according to which Player $i$ takes their actions on the path from the root to information set $I$, up to and including action $a$; furthermore, $x[\emptyseq]= 1$. Hence, in order to be a valid sequence-form strategy, $x$ must satisfy the `probability mass conservation' constraint: for all $I\in\mathcal{I}_i$, $\sum_{a\in A_I} {x}[(I, a)] = x[\sigma(I)]$. That is, every information sets partitions the probability mass received from the parent sequence onto its actions. In this sense, the constraints that define the space of sequence-form strategies naturally exhibit a hierarchical structure.

\vspace{-1mm}
\subsection{Extensive-Form Correlated Equilibria}\label{sec:efce}
\vspace{-1mm}

\textit{Extensive-form correlated equilibrium (EFCE)}~\citep{Stengel08:Extensive} is a natural extension of the solution concept of \textit{correlated equilibrium (CE)}~\citep{Aumann74:Subjectivity} to extensive-form games. In EFCE, a mediator privately reveals recommendations to the players as the game progresses. These recommendations are \emph{incremental}, in the sense that recommendations for the move to play at each decision point of the game are revealed only if and when the decision point is reached. This is in contrast with CE, where recommendations \emph{for the whole game} are privately revealed upfront when the game starts. Players are free to not follow the recommended moves, but once a player does not follow a recommendation, he will not receive further recommendations. In an EFCE, the recommendations are incentive-compatible---that is, the players are motivated to follow all recommendations.  EFCE and CE are good candidates to model strategic interactions in which intermediate forms of centralized control can be achieved~\citep{Ashlagi08:Value}.

In a recent preprint, \citet{Farina19:Correlation} show that in two-player perfect-recall extensive-form games, an EFCE that guarantees a social welfare (that is, sum of player's utilities) at least $\tau$ is the solution to a bilinear saddle-point problem, that is an optimization problem of the form
$
  \min_{\vec{x}\in\cX}\max_{\vec{y}\in\cY} \vec{x}^{\!\top}\!\! \mat{A} \vec{y},$
where $\cX$ and $\cY$ are convex and compact sets and $\mat{A}$ is a matrix of real numbers. In the case of EFCE, $\cX = \Xi$ is known as the \emph{polytope of correlation plans} (see \cref{sec:xi polytope}) and $\cY$ is the convex hull of certain sequence-form strategy spaces.
In general, $\Xi$ cannot be captured by a polynomially small set of constraints, since computing an optimal EFCE in a two-player perfect-recall game is computationally hard~\citep{Stengel08:Extensive}.\footnote{A feasible EFCE can be found in theoretical polynomial time~\citep{Huang08:Computing,Huang11:Equilibrium} using the \emph{ellipsoid-against-hope} algorithm~\citep{Papadimitriou08:Computing,Jiang15:Polynomial}. Unfortunately, that algorithm is known to not scale beyond small games.} However, in the special case of games with no chance moves, this is not the case, and $\Xi$ is the intersection of a polynomial (in the game tree size) number of constraints, as discussed in the next subsection.
In fact, most of the current paper is devoted to studying the structure of $\Xi$. We will largely ignore $\cY$, for which an efficient regret minimizer can already be built, for instance by using the theory of \emph{regret circuits}~\citep{Farina19:Regret} (see also \cref{app:efce saddle point}). Similarly, we will not use any property of matrix $\mat{A}$ (except that it can be computed and stored efficiently).

\vspace{-1mm}
\subsection{Polytope of Extensive-Form Correlation Plans in Games with no Chance Moves}\label{sec:xi polytope}
\vspace{-1mm}

In their seminal paper, \citet{Stengel08:Extensive} characterize the constraints that define the space of extensive-form correlation plans $\Xi$ in the case of two-player perfect-recall games \emph{with no chance moves}. The characterization makes use of the following two concepts:

\begin{definition}[Connected information sets, $I_1 \conn I_2$]\label{def:connected infosets}
 Let $I_1,I_2$ be information sets for Player 1 and 2, respectively. We say that $I_1$ and $I_2$ are \emph{connected}, denoted $I_1 \conn I_2$, if there exist two nodes $u \in I_1, v \in I_2$ such that $u$ is on the path from the root to $v$, or $v$ is on the path from the root to $u$.
\end{definition}

\begin{definition}[Relevant sequence pair, $\sigma_1 \rele \sigma_2$]\label{def:relevant seq pair}
  Let $\sigma_1 \in \Sigma_1, \sigma_2 \in \Sigma_2$. We say that $(\sigma_1,\sigma_2)$ is a \emph{relevant sequence pair}, and write $\sigma_1 \rele \sigma_2$, if either $\sigma_1$ or $\sigma_2$ or both is the empty sequence, or if $\sigma_1 = (I_1, a_1)$ and $\sigma_2 = (I_2, a_2)$ and $I_1 \conn I_2$. Similarly, given $\sigma_1 \in \Sigma_1$ and $I_2 \in \mathcal{I}_2$, we say that $(\sigma_1, I_2)$ forms a relevant sequence-information set pair, and write $\sigma_1 \rele I_2$, if $\sigma_1 = \emptyseq$ or if $\sigma_1 = (I_1, a_1)$ and $I_1 \conn I_2$ (a symmetric statement holds for $I_1 \rele \sigma_2$).
\end{definition}

\begin{definition}[\citet{Stengel08:Extensive}]\label{def:xi}
  In a two-player perfect-recall extensive-form game with no chance moves, the space $\Xi$ of \emph{correlation plans} is a convex polytope containing \emph{nonnegative} vectors indexed over relevant sequences pairs, and is defined as
\begin{equation*}\small
  \Xi \defeq \left\{\vec{\xi} \ge \vec{0}: \begin{array}{ll}
    \bullet\ \ \xi[\emptyseq, \emptyseq] = 1 \\[.5mm]
    \bullet\ \ \!\sum_{a \in A_I}~\!\xi[(I_1, a),\hspace{.1mm} \sigma_2] = \xi[\sigma(I_1),\hspace{.4mm} \sigma_2] & \forall I_1 \in \mathcal{I}_1, \sigma_2 \in \Sigma_2 \ \hspace{.0mm} \text{ s.t. } I_1 \rele \sigma_2\\[.5mm]
    \bullet\ \ \!\sum_{a \in A_J} \xi[\sigma_1, (I_2, a)] = \xi[\sigma_1, \sigma(I_2)] & \forall I_2 \in \mathcal{I}_2, \sigma_1 \in \Sigma_1 \,\text{ s.t. } \sigma_1 \rele I_2
  \end{array}\!\!\!\right\}\!.
\end{equation*}
In particular, $\Xi$ is the intersection of at most $1+|\mathcal{I}_1|\cdot |\Sigma_2|+|\Sigma_1|\cdot |\mathcal{I}_2|$ constraints, a polynomial number in the input game size.
\end{definition}

\subsection{Regret Minimization and Relationship with Bilinear Saddle-Point Problems}\label{sec:regret}

A regret minimizer is a device that supports two operations: (i) \textsc{Recommend}, which provides the next decision $\vec{x}^{t+1}\in\cX$, where $\cX$ is a nonempty, convex, and compact subset of a Euclidean space $\bbR^n$; and (ii) \textsc{ObserveLoss}, which receives/observes a convex loss function ${\ell}^t$ that is used to evaluate decision $\vec{x}^t$~\citep{Zinkevich03:Online}. In this paper, we will consider linear loss functions, which we represent in the form of a vector $\vec{\ell}^t \in \bbR^n$. A regret minimizer is an \emph{online} decision maker in the sense that each decision is made by taking into account only past decisions and their corresponding losses. The quality metric for the regret minimizer is its \emph{cumulative regret} $R^T$, defined as the difference between the loss cumulated by the sequence of decisions $\vec{x}^1, \dots, \vec{x}^T$ and the loss that would have been cumulated by the \emph{best-in-hindsight time-independent} decision $\hat{\vec{x}}$. Formally,
$
  R^T \defeq \sum_{t=1}^T \langle \vec{\ell}^t, \vec{x}^t\rangle - \min_{\hat{\vec{x}} \in \cX} \sum_{t=1}^T \langle \vec{\ell}^t , \hat{\vec{x}}\rangle.
$
A `good' regret minimizer has $R^T$ sublinear in $T$; this property is known as \emph{Hannan consistency}.
%
%
Hannan consistent regret minimizers can be used to converge to a solution of a \emph{bilinear saddle-point problem} (\cref{sec:efce}). To do so, two regret minimizers, one for $\cX$ and one for $\cY$, are set up so that at each time $t$ they observe loss vectors $\vec{\ell}^t_x \defeq -\mat{A}\vec{y}^t$ and $\vec{\ell}^t_y \defeq \mat{A}^{\!\top}\! \vec{x}^t$, respectively, where $\vec{x}^t \in \cX$ and $\vec{y}^t \in\cY$ are the decisions output by the two regret minimizers. A well-known folk theorem asserts that in doing so, at time $T$ the average decisions $(\bar{\vec{x}}^T, \bar{\vec{y}}^T) \defeq (\frac{1}{T}\sum_{t=1}^T \vec{x}^t, \frac{1}{T}\sum_{t=1}^T \vec{y}^t)$ have \emph{saddle-point gap} (a standard measure of how close a point is to being a saddle-point) $\gamma(\bar{\vec{x}}^T,\bar{\vec{y}}^T)\defeq \max_{\hat{\vec{x}} \in \cX} \hat{\vec{x}}^{\!\top}\!\! \mat{A} \bar{\vec{y}}^T - \min_{\hat{\vec{y}} \in \cY} (\bar{\vec{x}}^T)^{\!\top}\!\! \mat{A} \hat{\vec{y}}$ bounded above by $\gamma(\bar{\vec{x}}^T,\bar{\vec{y}}^T) \le (R^T_\cX + R^T_\cY)/T$ where $R^T_\cX$ and $R^T_\cY$ are the cumulative regrets of the regret minimizers. Since the regrets grow sublinearly, $\gamma(\bar{\vec{x}}^T,\bar{\vec{y}}^T) \to 0$ as $T \to+\infty$. As discussed in the introduction, this approach has been extremely successful in computational game theory.


    \section{Scaled Extension: A Convexity-Preserving Operation for Incrementally Constructing Strategy Spaces}

In this section, we introduce a new convexity-preserving operation between two sets. We show that it provides an alternative way of constructing the strategy space of a player in an extensive-form game that is different from the construction based on convex hulls and Cartesian products described by~\citet{Farina19:Regret}.
Our new construction enables one to \emph{incrementally extend} the strategy space in a top-down fashion, whereas the construction by~\citet{Farina19:Regret} was bottom-up. Most importantly, as we will show in \cref{sec:unrolling sf}, this new operation enables one to incrementally, recursively construct the extensive-form correlated strategy space (again in a top-down fashion). 

\begin{definition}\label{def:scaled extension}
Let $\cX$ and $\cY$ be nonempty, compact and convex sets, and let $h : \cX \to\bbR_+$ be a nonnegative affine real function. The \emph{scaled extension} of $\cX$ with $\cY$ via $h$ is defined as the set
\[
  \cX \ext^h \cY \defeq \{(\vec{x}, \vec{y}) : \vec{x} \in \cX,\ \vec{y} \in h(\vec{x}) \cY\}.
\]
\end{definition}

Since we will be composing multiple scaled extensions together, it is important to verify that the operation above not only preserves convexity, but also preserves the non-emptiness and compactness of the sets (a proof of the following Lemma is available in \cref{app:scaled extension}):

\begin{restatable}{lemma}{lemscaledextension}
  Let $\cX, \cY$ and $h$ be as in \cref{def:scaled extension}. Then $\displaystyle\cX \ext^h \cY$ is nonempty, compact and convex.
\end{restatable}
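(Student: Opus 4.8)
The plan is to verify the three properties (nonemptiness, compactness, convexity) of $\cX \ext^h \cY = \{(\vec{x}, \vec{y}) : \vec{x} \in \cX,\ \vec{y} \in h(\vec{x})\cY\}$ in turn, each reducing to the corresponding property of $\cX$ and $\cY$ together with the fact that $h$ is affine and nonnegative. The key observation throughout is that scaling a convex compact set $\cY$ by a nonnegative scalar $\lambda = h(\vec{x})$ is well-behaved: $\lambda\cY = \{\lambda\vec{y} : \vec{y}\in\cY\}$ is itself nonempty, compact, and convex, and it degenerates to $\{\vec{0}\}$ when $\lambda = 0$, which causes no trouble.

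\textbf{Nonemptiness.} Since $\cX$ is nonempty, pick any $\vec{x}_0 \in \cX$; since $\cY$ is nonempty, pick any $\vec{y}_0\in\cY$. Then $h(\vec{x}_0)\vec{y}_0 \in h(\vec{x}_0)\cY$, so $(\vec{x}_0, h(\vec{x}_0)\vec{y}_0)$ is a witness in $\cX\ext^h\cY$.

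\textbf{Convexity.} First I would take two points $(\vec{x}_1, \vec{y}_1), (\vec{x}_2, \vec{y}_2) \in \cX\ext^h\cY$ and a coefficient $\lambda\in[0,1]$, and show the convex combination $(\lambda\vec{x}_1 + (1-\lambda)\vec{x}_2,\ \lambda\vec{y}_1 + (1-\lambda)\vec{y}_2)$ lies in the set. The first coordinate lies in $\cX$ by convexity of $\cX$. For the second coordinate, I would write $\vec{y}_j = h(\vec{x}_j)\vec{z}_j$ with $\vec{z}_j\in\cY$ (when $h(\vec{x}_j) = 0$ this forces $\vec{y}_j = \vec{0}$ and $\vec{z}_j$ may be chosen arbitrarily in $\cY$), and use that $h$ is affine to get $h(\lambda\vec{x}_1 + (1-\lambda)\vec{x}_2) = \lambda h(\vec{x}_1) + (1-\lambda)h(\vec{x}_2)$. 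The goal is then to exhibit $\vec{z}\in\cY$ with $\lambda\vec{y}_1 + (1-\lambda)\vec{y}_2 = h(\lambda\vec{x}_1+(1-\lambda)\vec{x}_2)\,\vec{z}$, which I would obtain by taking $\vec{z}$ to be the appropriate convex combination of $\vec{z}_1, \vec{z}_2$ with weights proportional to $\lambda h(\vec{x}_1)$ and $(1-\lambda)h(\vec{x}_2)$; since $\cY$ is convex, $\vec{z}\in\cY$. The subtle case, and the main obstacle, is when the denominator $h(\lambda\vec{x}_1+(1-\lambda)\vec{x}_2)$ equals zero: then both $h(\vec{x}_1)$ and $h(\vec{x}_2)$ (being nonnegative and summing to zero with positive weights) must vanish, forcing $\vec{y}_1 = \vec{y}_2 = \vec{0}$, and the target second coordinate is $\vec{0} = 0\cdot\vec{z}$ for any $\vec{z}\in\cY$, so membership holds trivially. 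Handling this degenerate scaling cleanly is where care is needed.

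\textbf{Compactness.} Working in finite-dimensional Euclidean space, it suffices to show $\cX\ext^h\cY$ is closed and bounded. Boundedness is immediate: $\cX$ is bounded, and since $h$ is affine and continuous on the compact set $\cX$ it attains a finite maximum $M \ge 0$, so every second coordinate satisfies $\vec{y}\in h(\vec{x})\cY \subseteq [0,M]\cdot\cY$, a bounded set. For closedness I would take a convergent sequence $(\vec{x}_k, \vec{y}_k)\to(\vec{x}^\star, \vec{y}^\star)$ with each $(\vec{x}_k, \vec{y}_k)\in\cX\ext^h\cY$; then $\vec{x}^\star\in\cX$ by closedness of $\cX$, and I would write $\vec{y}_k = h(\vec{x}_k)\vec{z}_k$ with $\vec{z}_k\in\cY$. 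By compactness of $\cY$, pass to a subsequence along which $\vec{z}_k \to \vec{z}^\star \in \cY$; continuity of $h$ gives $h(\vec{x}_k)\to h(\vec{x}^\star)$, so $\vec{y}^\star = h(\vec{x}^\star)\vec{z}^\star \in h(\vec{x}^\star)\cY$, establishing $(\vec{x}^\star,\vec{y}^\star)\in\cX\ext^h\cY$. This is a standard argument and presents no real difficulty beyond the bookkeeping already encountered in the convexity step.
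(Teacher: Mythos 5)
Your proposal is correct and follows essentially the same route as the paper's proof: the same witness for nonemptiness, the same Heine--Borel argument for compactness (boundedness via $h^\ast = \max_{\vec{x}\in\cX} h(\vec{x})$, sequential closedness with a convergent subsequence extracted from $\cY$), and the same convexity argument built on a convex combination in $\cY$ with weights proportional to the $h$-values, including the degenerate case where the scaling factor vanishes. The only cosmetic difference is that you prove convexity directly for arbitrary $\lambda \in [0,1]$, whereas the paper proves midpoint convexity and invokes closedness of $\cX \ext^h \cY$ to upgrade it to full convexity, so your variant is marginally more self-contained in that the convexity step does not lean on the compactness step.
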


\subsection{Construction of the Set of Sequence-Form
Strategies}
\label{sec:unrolling sf}

The scaled extension operation can be used to construct the polytope of a perfect-recall player's strategy in sequence-form in an extensive-form game. We illustrate the approach in the small example of \cref{fig:small sf}; the generalization to any extensive-form strategy space is immediate. As noted in \cref{sec:efg}, any valid sequence-form strategy must satisfy probability mass constraints, and can be constructed incrementally in a top-down fashion, as follows (in the following we refer to the same naming scheme as in \cref{fig:small sf} for the sequences of Player 1):
\begin{enumerate}[leftmargin=*,nolistsep,itemsep=0mm,label=\roman*.]
  \item First, the empty sequence is set to value $x[\emptyseq] = 1$.
  \item (Info set \textsc{a}) Next, the value $x[\emptyseq]$ is {partitioned} into the two non-negative values $x[1]\!+\! x[2] \!=\! x[\emptyseq]$.
  \item (Info set \textsc{b}) Next, the value $x[1]$ is partitioned into two non-negative values $x[3] + x[4] = x[1]$.
  \item (Info set \textsc{c}) Next, the value $x[1]$ is partitioned into two non-negative values $x[5] + x[6] = x[1]$.
  \item (Info set \textsc{d}) Next, the value $x[2]$ is partitioned into 3 non-negative values $x[7] \!+\! x[8] \!+\! x[9] \!=\! x[2]$.
\end{enumerate}
The incremental choices in the above recipe can be directly translated---in the same order---into set operations by using scaled extensions, as follows:
\begin{enumerate}[leftmargin=*,nolistsep,itemsep=0mm,label=\roman*.]
  \item First, the set of all feasible values of sequence $x[\emptyseq]$ is the singleton $\cX_0 \defeq \{1\}$.
  \item Then, the set of all feasible values of $(x[\emptyseq], x[1], x[2])$ is the set $ \cX_1 \defeq \cX_0 \times \symp{2} = \cX_0 \ext^{h_1} \symp{2}$, where $h_1$ is the linear function $h_1 : \cX_0 \ni x[\emptyseq] \mapsto x[\emptyseq]$ (the identity function).
  \item In order to characterize the set of all feasible values of $(x[\emptyseq], \dots, x[4])$ we start from $\cX_1$, and \emph{extend} any element $(x[\emptyseq], x[1], x[2]) \in \cX_1$ with the two sequences $x[3]$ and $x[4]$, drawn from the set $\{(x[3], x[4]) \in \bbR_2^+ : x[3] + x[4] = x[1]\} = x[1] \symp{2}$. We can express this extension using scaled extension:
      $\cX_2 \defeq \cX_1 \ext^{h_2} \symp{2}$, where $h_2 : \cX_1 \ni (x[\emptyseq], x[1], x[2]) \mapsto x[1]$.
  \item Similarly, we can extend every element in $\cX_2$ to include $(x[5], x[6]) \in x[1] \symp{2}$: in this case, $\cX_3 \defeq \cX_2 \ext^{h_3} \symp{2}$, where $h_3 : \cX_2 \ni (x[\emptyseq], x[1], x[2], x[3], x[4]) \mapsto x[1]$.
  \item The set of all feasible $\!(x[\emptyseq],..,x[9])$ is $\cX_4 \!\defeq\! \cX_3\! \ext^{h_4}\! \symp{3}$, where $h_4 \!:\! \cX_3 \!\ni\! (x[\emptyseq], \dots,\! x[6]) \!\mapsto\! x[2]$.
\end{enumerate}

Hence, the polytope of sequence-form strategies for Player 1 in \cref{fig:small sf} can be expressed as
\[\displaystyle
  \{1\} \ext^{h_1}\symp{2}\ext^{h_2}\symp{2}\ext^{h_3}\symp{2}\ext^{h_4} \symp{3},
\]
where the scaled extension operation is intended as left associative.

\subsection{Regret Minimizer for Scaled Extension}\label{sec:circuit}

It is always possible to construct a regret minimizer for $\cZ = \displaystyle \cX \ext^h \cY$, where $h(\vec{x}) = \langle\vec{a},\vec{x}\rangle + b$, starting from a regret minimizer for $\cX \subseteq \bbR^{m}$ and $\cY \subseteq \bbR^{n}$. The fundamental technical insight of the construction is that, given any vector $\vec{\ell} = (\vec{\ell}_x, \vec{\ell}_y) \in \bbR^{m} \times \bbR^{n}$, the minimization of a linear function $\vec{z}\mapsto \langle \vec{\ell}, \vec{z}\rangle$ over $\cZ$ can be split into two separate linear minimization problems over $\cX$ and $\cY$:
\begin{align*}
  \min_{\vec{z} \in \cZ} \ \langle \vec{\ell}, \vec{z}\rangle &= \min_{\substack{\vec{x}\in \cX , \vec{y}\in\cY}} \big\{\langle \vec{\ell}_x,\vec{x}\rangle + h(\vec{x})\langle \vec{\ell}_y, \vec{y}\rangle \big\} = \min_{\vec{x} \in \cX} \big\{\langle \vec{\ell}_x, \vec{x} \rangle + h(\vec{x}) \min_{\vec{y}\in\cY} \langle \vec{\ell}_y, \vec{y}\rangle \big\} \\
&= \min_{\vec{x} \in \cX} \big\{ \big\langle \vec{\ell}_x + \vec{a}\cdot \min_{\vec{y}\in\cY} \langle \vec{\ell}_y, \vec{y}\rangle, \vec{x}\big\rangle\big\} + b\cdot\min_{\vec{y}\in\cY}\langle \vec{\ell}_y, \vec{y}\rangle.
\end{align*}
Thus, it is possible to break the problem of minimizing regret over $\cZ$ into two regret minimization subproblems over $\cX$ and $\cY$ (more details in \cref{app:regret circuit}). In particular:
\begin{proposition}\label{prop:regret bound}
Let \regm{X} and \regm{Y} be two regret minimizer over $\cX$ and $\cY$ respectively, and let $R^T_\cX, R^T_\cY$ denote their cumulative regret at time $T$. Then, \cref{algo:rm} provides a regret minimizer over $\cZ$ whose cumulative regret $R^T_\cZ$ is bounded above as $R_\cZ^T \le R_\cX^T + h^\ast R_\cY^T$, where $h^* \defeq \max_{\vec{x}\in\cX} h(\vec{x})$.
\vspace{-3mm}
\begin{algorithm}[H]
    \caption{Regret minimizer over the scaled extension $ \cX \ext^h \cY$.}
    \label{algo:rm}
    \small
    \hspace{1mm}\includegraphics[scale=.83]{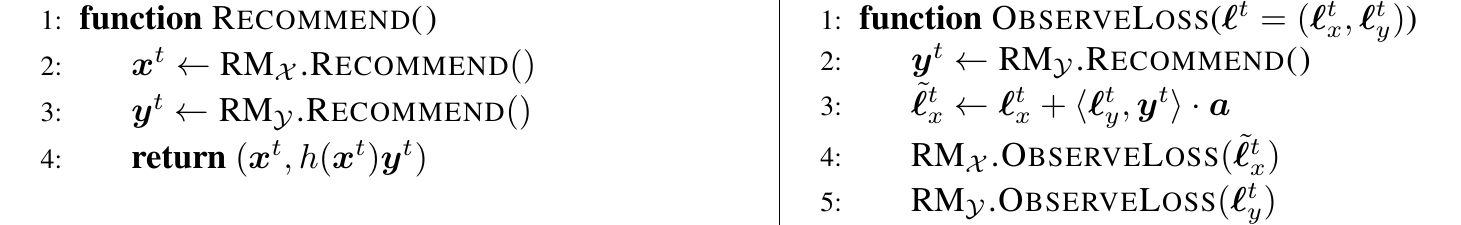}
\end{algorithm}
\end{proposition}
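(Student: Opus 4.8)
The plan is to prove the stronger per-comparator statement: for every fixed $(\hat{\vec x},\hat{\vec y})\in\cX\times\cY$, the regret of \cref{algo:rm} against the corresponding comparator $\hat{\vec z} = (\hat{\vec x},\, h(\hat{\vec x})\hat{\vec y})$ is at most $R_\cX^T + h^\ast R_\cY^T$. Since every point of $\cZ = \cX \ext^h \cY$ has exactly this form (including the degenerate case $h(\hat{\vec x})=0$, where the second block is $\vec 0$), taking the supremum over comparators then yields the claimed bound on $R^T_\cZ$. Throughout I write the decision produced at round $t$ as $\vec z^t = (\vec x^t,\, h(\vec x^t)\vec y^t)$, where $\vec x^t,\vec y^t$ are the outputs of \regm{X} and \regm{Y}, and I use the fact that \cref{algo:rm} feeds $\vec\ell^t_y$ to \regm{Y} and the surrogate $\hat{\vec\ell}^t_x \defeq \vec\ell^t_x + \vec a\,\langle\vec\ell^t_y,\vec y^t\rangle$ to \regm{X}. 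I would first observe that this is a legitimate online construction: $\hat{\vec\ell}^t_x$ depends only on the already-committed decision $\vec y^t$ and on the observed loss $\vec\ell^t$, so no sub-minimizer needs to see its loss before committing a decision.

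The core computation re-expresses the instantaneous loss on $\cZ$. Using $h(\vec x)=\langle\vec a,\vec x\rangle+b$, I would expand
\[
  \langle\vec\ell^t,\vec z^t\rangle = \langle\vec\ell^t_x,\vec x^t\rangle + h(\vec x^t)\langle\vec\ell^t_y,\vec y^t\rangle = \langle\hat{\vec\ell}^t_x,\vec x^t\rangle + b\,\langle\vec\ell^t_y,\vec y^t\rangle,
\]
so the algorithm's loss stream is exactly \regm{X}'s surrogate loss stream plus a term that does not depend on any $\cX$-decision. Doing the analogous expansion on the comparator, and using $\langle\vec a,\hat{\vec x}\rangle = h(\hat{\vec x})-b$, gives $\langle\hat{\vec\ell}^t_x,\hat{\vec x}\rangle = \langle\vec\ell^t_x,\hat{\vec x}\rangle + (h(\hat{\vec x})-b)\langle\vec\ell^t_y,\vec y^t\rangle$. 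The decisive algebraic step is to subtract the comparator loss from the algorithm loss and track the $\vec y^t$-dependent cross terms: the $b\,\langle\vec\ell^t_y,\vec y^t\rangle$ from the algorithm side and the $(h(\hat{\vec x})-b)\langle\vec\ell^t_y,\vec y^t\rangle$ from the comparator side combine into exactly $h(\hat{\vec x})\langle\vec\ell^t_y,\vec y^t\rangle$, which then pairs with the $-h(\hat{\vec x})\langle\vec\ell^t_y,\hat{\vec y}\rangle$ term already present. Summing over $t$, the $\cZ$-regret against $(\hat{\vec x},\hat{\vec y})$ splits cleanly as
\[
  \Big(\textstyle\sum_t\langle\hat{\vec\ell}^t_x,\vec x^t\rangle - \sum_t\langle\hat{\vec\ell}^t_x,\hat{\vec x}\rangle\Big) + h(\hat{\vec x})\Big(\textstyle\sum_t\langle\vec\ell^t_y,\vec y^t\rangle - \sum_t\langle\vec\ell^t_y,\hat{\vec y}\rangle\Big).
\]

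To finish, I would invoke the two regret guarantees: the first parenthesized quantity is the regret of \regm{X} against $\hat{\vec x}$ and is at most $R_\cX^T$, and the second is the regret of \regm{Y} against $\hat{\vec y}$ and is at most $R_\cY^T$. Since $h$ is nonnegative by hypothesis, multiplying the second bound by $h(\hat{\vec x})\ge 0$ preserves the inequality, and then $h(\hat{\vec x})\le h^\ast$ yields the stated bound. I expect the main obstacle to be twofold and mostly in the setup rather than the estimates: identifying the correct surrogate loss $\hat{\vec\ell}^t_x$ — directly read off from the linear-minimization splitting displayed before the statement — so that the cross terms telescope into precisely $h(\hat{\vec x})$ times \regm{Y}'s regret; and, more subtly, the final scaling from $h(\hat{\vec x})$ to $h^\ast$, which requires $R_\cY^T\ge 0$ (otherwise the inequality would flip). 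I would handle the latter by noting that $R_\cY^T$ denotes a nonnegative cumulative-regret bound, as holds for the concrete sub-minimizers used (e.g. regret matching). Everything else is routine bookkeeping inherited from the splitting identity.
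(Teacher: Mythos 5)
Your proof is correct and takes essentially the same approach as the paper's (Appendix~\ref{app:regret circuit}): it uses the identical surrogate loss $\tilde{\vec{\ell}}^t_x = \vec{\ell}^t_x + \langle\vec{\ell}^t_y,\vec{y}^t\rangle\,\vec{a}$ fed to \regm{X} while \regm{Y} observes $\vec{\ell}^t_y$, and your per-comparator exact identity is just a cleaner bookkeeping of the paper's min/max manipulation of the regret definition. The $R^T_\cY \ge 0$ caveat you flag is genuine but equally implicit in the paper's own final step $\max_{\hat{\vec{x}}\in\cX}\{h(\hat{\vec{x}})R^T_\cY\} = h^\ast R^T_\cY$, so it is not a gap relative to the paper's argument.
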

\vspace{-4mm}

\cref{algo:rm} can be composed recursively to construct a regret minimizer for any set that is expressed via a chain of scaled extensions, such as the polytope of sequence-form strategies (\cref{sec:unrolling sf}) or that of extensive-form correlation plans (\cref{sec:unrolling efce}).
When used on the polytope of sequence-form strategies, \cref{algo:rm} coincides with the CFR algorithm if all regret minimizers for the individual simplexes in the chain of scaled extensions are implemented using the regret matching algorithm~\citep{Hart00:Simple}.

    \section{Unrolling the Structure of the Correlated Strategy Polytope}
\label{sec:unrolling efce}

In this section, we study the combinatorial structure of the polytope of correlated strategies (\cref{sec:xi polytope}) of a two-player perfect-recall extensive-form game with no chance moves. The central result of this section, \cref{thm:decomposition}, asserts that the correlated strategy polytope $\Xi$ can be expressed via a chain of scaled extensions. This matches the similar result regarding the sequence-form strategy polytope that we discussed in \cref{sec:unrolling sf}. However, unlike the sequence-form strategy polytope, the constraints that define the correlated strategy polytope do not exhibit a natural hierarchical structure: the constraints that define $\Xi$ (\cref{def:xi}) are such that the same entry of the correlation plan $\vec{\xi}$ can appear in multiple constraints, and furthermore the constraints will in general form cycles. This makes the problem of unrolling the structure of $\Xi$ significantly more challenging.

The key insight is that some of the constraints that define $\Xi$ are redundant (that is, implied by the remaining constraints) and can therefore be safely eliminated. Our algorithm identifies one such set of redundant constraints, and removes them. The set is chosen in such a way that the remaining constraints can be laid down in a hierarchical way that can be captured via a chain of scaled extensions.

\subsection{Example}\label{sec:example unroll}

Before we delve into the technical details of the construction, we illustrate the key idea of the algorithm in a small example. In particular, consider the small game tree of \cref{fig:example unroll} (left), where we used the same conventions as in \cref{sec:efg} and \cref{fig:small sf}. All sequence pairs are relevant; the set of constraints that define $\Xi$ is shown in \cref{fig:example unroll} (middle).

\begin{figure}[ht]
\begin{minipage}[t]{2.5cm}
\begin{tikzpicture}[baseline=0pt]
  \node[fill=black,draw=black,circle,inner sep=.5mm] (A) at (0, 0) {};
  \node[fill=white,draw=black,circle,inner sep=.5mm] (B) at (-.57,-.8) {};
  \node[fill=white,draw=black,circle,inner sep=.5mm] (C) at (.57,-.8) {};
  \node[fill=white,draw=black,inner sep=.6mm] (l1) at (-1,-1.6) {};
  \node[fill=white,draw=black,inner sep=.6mm] (l2) at (-.25,-1.6) {};
  \node[fill=white,draw=black,inner sep=.6mm] (l4) at (1,-1.6) {};
  \node[fill=white,draw=black,inner sep=.6mm] (l3) at (.25,-1.6) {};

  \draw[semithick] (A) --node[fill=white,inner sep=.9] {\scriptsize$1$} (B) --node[fill=white,inner sep=.9] {\scriptsize$1$} (l1);
  \draw[semithick] (B) --node[fill=white,inner sep=.9] {\scriptsize$2$} (l2);
  \draw[semithick] (A) --node[fill=white,inner sep=.9] {\scriptsize$2$} (C) --node[fill=white,inner sep=.9] {\scriptsize$3$} (l3);
  \draw[semithick] (C) --node[fill=white,inner sep=.9] {\scriptsize$4$} (l4);

  \draw[black!60!white] (A) circle (.2);
  \node[black!60!white]  at ($(A) + (-.4, 0)$) {\textsc{a}};

  \draw[black!60!white] (B) circle (.2);
  \node[black!60!white]  at ($(B) + (-.4, 0)$) {\textsc{b}};

  \draw[black!60!white] (C) circle (.2);
  \node[black!60!white]  at ($(C) + (.4, 0)$) {\textsc{c}};
\end{tikzpicture}
\end{minipage}\hspace{4mm}
\begin{minipage}[t]{7.3cm}\small
  In this game, $\Xi$ is defined by the following constraints:\\[1mm]
  $\displaystyle
    \left\{\!\!\begin{array}{lr}
      \xi[\emptyseq,\emptyseq] = 1,\\
      \xi[\sigma_1,1] + \xi[\sigma_1,2] = \xi[\sigma_1, \emptyseq] &\forall \sigma_1\in\{\emptyseq,1,2\},\\
      \xi[\sigma_1,3] + \xi[\sigma_1,4] = \xi[\sigma_1, \emptyseq] &\forall \sigma_1\in\{\emptyseq,1,2\},\\
      \xi[1,\sigma_2] + \xi[2,\sigma_2] = \xi[\emptyseq, \sigma_2] &\forall \sigma_2\in\{\emptyseq,1,2,3,4\}.
    \end{array}\right.\hspace{-3mm}
  $
\end{minipage}\hspace{5mm}
\begin{minipage}[t]{3.1cm}%
  \def\sep{.1}%
  \def\side{.46}%
  \def\dd{.07}%
  \begin{tikzpicture}[baseline=3pt]
    \tikzstyle{outer}=[semithick];
    \begin{scope}
        \clip(-.29,.29) rectangle (5*\side + 2*\sep+0.05,-3*\side-\sep-0.05);
        \draw[outer] (0, 0) rectangle (\side, -\side);
        \draw[outer] ($(\side + \sep,0)$) rectangle ($(3 * \side + \sep, -\side)$);
        \draw[outer] ($(3 * \side + 2 * \sep, 0)$) rectangle ($(5 * \side + 2 * \sep, -\side)$);
        \draw[outer] ($(0, -\side - \sep)$) rectangle ($(\side, -3 * \side - \sep)$);
        \draw[outer] ($(\side + \sep, -\side - \sep)$) rectangle ($(3 * \side + \sep, -3 * \side - \sep)$);
        \draw[outer] ($(3 * \side + 2 * \sep, -\side - \sep)$) rectangle ($(5 * \side + 2 * \sep, -3 * \side - \sep)$);

        \draw ($(0, -2*\side -\sep)$) -- ($(\side,-2*\side-\sep)$);
        \draw ($(\side+\sep, -2*\side -\sep)$) -- ($(3*\side+\sep,-2*\side-\sep)$);
        \draw ($(3*\side+2*\sep, -2*\side -\sep)$) -- ($(5*\side+2*\sep,-2*\side-\sep)$);
        \draw ($(2*\side+\sep, 0)$) -- ($(2*\side+\sep,-\side)$);
        \draw ($(2*\side+\sep, -\side-\sep)$) -- ($(2*\side+\sep,-3*\side-\sep)$);
        \draw ($(4*\side+2*\sep, 0)$) -- ($(4*\side+2*\sep,-\side)$);
        \draw ($(4*\side+2*\sep, -\side-\sep)$) -- ($(4*\side+2*\sep,-3*\side-\sep)$);

        \node at ($(.5*\side,.2)$) {\scriptsize$\emptyseq$};
        \node at ($(\sep+1.5*\side,.2)$) {\scriptsize$1$};
        \node at ($(\sep+2.5*\side,.2)$) {\scriptsize$2$};
        \node at ($(2*\sep+3.5*\side,.2)$) {\scriptsize$3$};
        \node at ($(2*\sep+4.5*\side,.2)$) {\scriptsize$4$};

        \node at ($(-.2,-.5*\side)$) {\scriptsize$\emptyseq$};
        \node at ($(-.2,-\sep-1.5*\side)$) {\scriptsize$1$};
        \node at ($(-.2,-\sep-2.5*\side)$) {\scriptsize$2$};

        \fill[black!10!white,opacity=.9] ($(\dd,-\dd)$) rectangle node[black!60!white,opacity=1]{\circled{1}} ($(\side-\dd,-\side+\dd)$);
        \fill[black!10!white,opacity=.9] ($(\dd,-\side-\sep-\dd)$) rectangle node[black!60!white,opacity=1]{\circled{2}} ($(\side-\dd,-3*\side-\sep+\dd)$);
        \fill[black!10!white,opacity=.9] ($(\side+\sep+\dd,-\side-\sep-\dd)$) rectangle node[black!60!white,opacity=1]{\circled{3}} ($(3*\side+\sep-\dd,-3*\side-\sep+\dd)$);
        \fill[black!10!white,opacity=.9] ($(3*\side+2*\sep+\dd,-\side-\sep-\dd)$) rectangle node[black!60!white,opacity=1]{\circled{3}} ($(5*\side+2*\sep-\dd,-3*\side-\sep+\dd)$);
        \fill[black!10!white,opacity=.9] ($(\side+\sep+\dd,-\dd)$) rectangle node[black!60!white,opacity=1]{\circled{4}} ($(3*\side+\sep-\dd,-\side+\dd)$);
        \fill[black!10!white,opacity=.9] ($(3*\side+2*\sep+\dd,-\dd)$) rectangle node[black!60!white,opacity=1]{\circled{4}} ($(5*\side+2*\sep-\dd,-\side+\dd)$);
    \end{scope}
  \end{tikzpicture}
\end{minipage}
\vspace{-1mm}
\caption{\small(Left) Example game (\cref{sec:example unroll}). (Middle) Constraints that define $\Xi$ in the example game. (Right) Fill-in order of $\vec{\xi}$. The cell at the intersection of row $\sigma_1$ and column $\sigma_2$ represents the entry $\xi[\sigma_1,\sigma_2]$ of $\vec{\xi}$.}
\label{fig:example unroll}
\vspace{-2mm}
\end{figure}
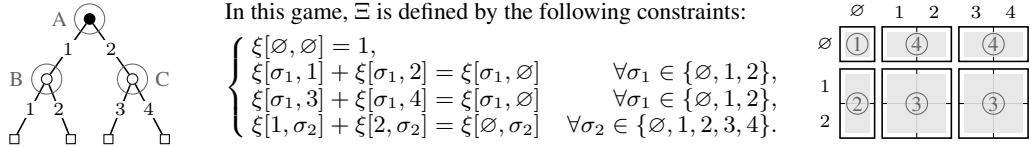

In order to generate all possible correlation plans $\vec{\xi} \in \Xi$, we proceed as follows. First, we assign $\xi[\emptyseq,\emptyseq] = 1$. Then, we partition $\xi[\emptyseq,\emptyseq]$ into two non-negative values $(\xi[1,\emptyseq],\xi[2,\emptyseq]) \in \xi[\emptyseq,\emptyseq]\symp{2}$ in accordance with the constraint $\xi[1,\emptyseq] + \xi[2,\emptyseq] = \xi[\emptyseq,\emptyseq]$. Next, using the constraints $\xi[\sigma_1, 1] + \xi[\sigma_1, 2] = \xi[\sigma_1, \emptyseq]$ and $\xi[\sigma_1, 3] + \xi[\sigma_1, 4] = \xi[\sigma_1, \emptyseq]$, we pick values $(\xi[\sigma_1, 1],\xi[\sigma_1,2])\in\xi[\sigma_1,\emptyseq]\symp{2}$ and $(\xi[\sigma_1, 3], \xi[\sigma_1, 4]) \in \xi[\sigma_1,\emptyseq]\symp{2}$ for $\sigma_1\in\{1,2\}$. So far, our strategy for filling the correlation plan has been to \emph{split} entries according to the information structure of the players. As shown in \cref{sec:unrolling sf}, these steps can be expressed via scaled extension operations.

Next, we fill in the four remaining entries in $\xi$, that is $\xi[\emptyseq,\sigma_2]$ for $\sigma_2\in\{1,2,3,4\}$, in accordance with constraint $\xi[1,\sigma_2] + \xi[2,\sigma_2] = \xi[\emptyseq,\sigma_2]$. In this step, we are not splitting any value; rather, we fill in $\xi[\emptyseq,\sigma_2]$ in the only possible way (that is, $\xi[\emptyseq,\sigma_2] = \xi[1,\sigma_2] + \xi[2,\sigma_2]$), by means of a linear combination of already-filled-in entries. This operation
can be also expressed via scaled extensions, with the singleton set $\{1\}$:
$
  \{(\xi[1,\sigma_2], \xi[2,\sigma_2], \xi[\emptyseq,\sigma_2])\} = \{(\xi[1,\sigma_2], \xi[2,\sigma_2])\} \ext^h \{1\},
$
where $h: (\xi[1,\sigma_2], \xi[2,\sigma_2])\mapsto \xi[1,\sigma_2] + \xi[2,\sigma_2]$ (note that $h$ respects the requirements of \cref{def:scaled extension}). This way, we have filled in all entries in $\xi$. However, only 9 out of the 11 constraints have been taken into account in the construction, and we still need to verify that the two leftover constraints $\xi[\emptyseq,1] + \xi[\emptyseq,2] = \xi[\emptyseq,\emptyseq]$ and $\xi[\emptyseq,3] + \xi[\emptyseq, 4] = \xi[\emptyseq,\emptyseq]$ are automatically satisfied by our way of filling in the entries of $\vec{\xi}$. Luckily, this is always the case: by construction, $\xi[\emptyseq,1] \!+\! \xi[\emptyseq,2] \!=\! (\xi[1,1] \!+\! \xi[1,2]) \!+\! (\xi[2,1] \!+\! \xi[2,2]) \!=\! \xi[1,\emptyseq] \!+\! \xi[2,\emptyseq] \!=\! \xi[\emptyseq,\emptyseq]$ (the proof for $\xi[\emptyseq,3] + \xi[\emptyseq,4]$ is analogous). %
We summarize the construction steps pictorially in \cref{fig:example unroll} (right).

\begin{remark}\label{rem:unfavorable}
    Similar construction that starts from assigning values for $\xi[\emptyseq,\sigma_2]$ ($\sigma_2\in\{1,2,3,4\}$ using constraints $\xi[\emptyseq,1]+\xi[\emptyseq,2]=\xi[\emptyseq,\emptyseq]$, $\xi[\emptyseq,3]+\xi[\emptyseq,4]=\xi[\emptyseq,\emptyseq]$ and fills out $\xi[\sigma_1,\sigma_2]$ for $(\sigma_1,\sigma_2)\in\{1,2\}\times\{1,2,3,4\}$ would have not been successful: if $(\xi[1,1],\xi[1,2])$ and $(\xi[1,3],\xi[1,4])$ are filled in independently, there is no way of guaranteeing that $\xi[1,1]+\xi[1,2] = \xi[1,3]+\xi[1,4]$ ($=\xi[1,\emptyseq]$) as required by the constraints.
\end{remark}

\subsection{An Unfavorable Case that Cannot Happen in Games with No Chance Moves}\label{sec:unfavorable}

We now show that there exist game instances in which the general approach used in the previous subsection fails.
In particular, consider a relevant sequence pair $(\sigma_1,\sigma_2)$ such that both $\sigma_1$ and $\sigma_2$ are parent sequences of two information sets of Player 1 and Player 2 respectively, and assume that all sequence pairs in the game are relevant. Then, no matter what the order of operations is, the situation described in \cref{rem:unfavorable} cannot be avoided. Luckily, in two-player perfect-recall games with no chance moves, one can prove that this occurrence never happens (see \cref{app:unfavorable} for a proof):

\begin{restatable}{proposition}{propunfavorable}\label{prop:unfavorable}
  Consider a two-player perfect-recall game with no chance moves, and let $(\sigma_1,\sigma_2)$ be a relevant sequence pair, let $I_1, I_1'$ be two distinct information sets of Player 1 such that $\sigma(I_1) = \sigma(I_1') = \sigma_1$, and let $I_2,I_2'$ be two distinct information sets of Player 2 such that $\sigma(I_2) = \sigma(I_2') = \sigma_2$. It is not possible that both $I_1 \conn I_2$ and $I_1' \conn I_2'$.
\end{restatable}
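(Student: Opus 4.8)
The plan is to argue by contradiction: assume both $I_1 \conn I_2$ and $I_1' \conn I_2'$, and derive that a single internal node of the game tree would have to belong to both players at once. Throughout, for nodes I write $u \preceq w$ to mean that $u$ lies on the path from the root to $w$ (reserving $\succeq$ for the sequence order, as in the paper). The argument rests on two structural observations about information sets that share a parent sequence, followed by a ``collision'' argument driven by the two connectedness hypotheses.

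First I would establish an \emph{incomparability lemma}: since $I_1 \neq I_1'$ but $\sigma(I_1) = \sigma(I_1') = \sigma_1$, no node $u \in I_1$ is an ancestor or descendant of a node $u' \in I_1'$. Indeed, if $u \prec u'$ then Player~1 acts at $u$ (at information set $I_1$), so the last Player-1 sequence on the root-to-$u'$ path is a strict $\succ$-descendant of $\sigma(I_1)=\sigma_1$, contradicting $\sigma(I_1')=\sigma_1$; the case $u' \prec u$ is symmetric, and $u=u'$ is impossible because distinct information sets are disjoint. The identical statement holds for $I_2$ and $I_2'$.

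Next I would prove the \emph{ownership lemma}, which is where perfect recall and the absence of chance are used. For incomparable $u\in I_1,\,u'\in I_1'$, let $p_1$ be their least common ancestor; it is internal (it has $u,u'$ in two distinct subtrees), hence owned by a player. I claim $p_1$ is a Player-2 node. If instead $p_1$ were a Player-1 node at some information set $I^\ast$, the two edges of $p_1$ toward $u$ and toward $u'$ would carry distinct actions $a\neq a'$, so $(I^\ast,a)$ and $(I^\ast,a')$ lie on the root-to-$u$ and root-to-$u'$ paths, giving $\sigma_1 \succeq (I^\ast,a)$ and $\sigma_1 \succeq (I^\ast,a')$. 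But the sequences $\tau$ with $\sigma_1 \succeq \tau$ form a chain, whereas two distinct sequences sharing the information set $I^\ast$ are $\succeq$-incomparable --- a contradiction. Since there are no chance nodes, $p_1$ must be a Player-2 node. By the symmetric argument, the least common ancestor $p_2$ of any incomparable $v\in I_2,\,v'\in I_2'$ is a Player-1 node. This step is the crux and is exactly where ``no chance moves'' is indispensable: with chance present, $p_1$ could be a chance node and no contradiction would follow.

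Finally I would produce the collision. From $I_1 \conn I_2$ choose comparable $u\in I_1,\,v\in I_2$ and a root-to-node path $P$ through both; from $I_1' \conn I_2'$ choose comparable $u'\in I_1',\,v'\in I_2'$ and a path $P'$ through both. Let $q$ be the deepest node shared by $P$ and $P'$; the paths do differ, since $u,u'$ cannot lie on a common path by the incomparability lemma. Using incomparability again, $u$ and $u'$ --- and likewise $v$ and $v'$ --- must each lie strictly below $q$ in the two distinct branches leaving $q$, so $q$ is simultaneously the least common ancestor of $u,u'$ and of $v,v'$. By the ownership lemma, being the least common ancestor of $u\in I_1$ and $u'\in I_1'$ forces $q$ to be a Player-2 node, while being that of $v\in I_2$ and $v'\in I_2'$ forces it to be a Player-1 node, the desired contradiction. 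I expect the ownership lemma to be the main obstacle to make fully rigorous, as it requires carefully tying the branching actions at $q$ back to the shared parent sequences through perfect recall; I also note that the relevance of $(\sigma_1,\sigma_2)$ is not actually used in this argument and is inherited only from the ambient setting.
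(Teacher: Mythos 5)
Your proof is correct, and it shares its engine with the paper's proof while finishing differently. The paper's argument is exactly your ownership lemma applied once: it takes $w = \mathrm{LCA}(u,u')$, rules out $w=u$ or $w=u'$ (your incomparability lemma in disguise), rules out Player~1 ownership via perfect recall, and invokes the absence of chance to conclude $w$ belongs to Player~2 --- the same ``crux'' step you identify. Where the two proofs diverge is the endgame. The paper stays asymmetric: it argues that since $w$ is a Player-2 node and $\sigma(I_2)=\sigma(I_2')=\sigma_2$, neither $v$ nor $v'$ can be a descendant of $w$ (another perfect-recall argument), so both lie on the root-to-$w$ path, are therefore comparable, and comparability forces one of $I_2,I_2'$ to precede the other, contradicting the equality of their parent sequences. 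You instead apply the ownership lemma a second time, symmetrically: you show the deepest common node $q$ of the two witness paths (which coincides with the paper's $w$) is simultaneously $\mathrm{LCA}(u,u')$ and $\mathrm{LCA}(v,v')$, so it would have to be owned by both players at once. Amusingly, under the absurd hypothesis the two proofs derive opposite placements of $v,v'$ relative to $w=q$ --- the paper puts them strictly above, you put them strictly below --- and both derivations are sound, each exposing the inconsistency by a different route. Your version buys symmetry and economy of ideas (one lemma, used twice, with no separate ``$v,v'$ are not descendants of $w$'' analysis), at the cost of the bookkeeping needed to show $q$ is both least common ancestors; the paper's is shorter but relies on two distinct perfect-recall arguments. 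Your closing remark is also accurate: the relevance of $(\sigma_1,\sigma_2)$ is not used in the paper's proof either, only $I_1\conn I_2$ and $I_1'\conn I_2'$ together with the shared parent sequences.
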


In other words, if $I_1 \conn I_2$, then any pair of sequences $(\sigma'_1,\sigma'_2)$ where $\sigma'_1$ belongs to $I_1'$ and $\sigma'_2$ belongs to $I_2'$ is \emph{irrelevant}. As we show in the next subsection, this is enough to yield a polynomial-time algorithm to `unroll' the process of filling in the entries of $\vec{\xi} \in \Xi$ in any two-player perfect-recall extensive-form game with no chance moves.
The following definition is crucial for that algorithm:
\begin{definition}
  Let $(\sigma_1,\sigma_2)$ be a relevant sequence pair, and let $I_1 \in \mathcal{I}_1$ be an information set for Player 1  such that $\sigma(I_1) = \sigma_1$. Information set $I_1$ is called \emph{critical for} $\sigma_2$ if there exists at least one $I_2 \in \mathcal{I}_2$ with $\sigma(I_2) = \sigma_2$ such that $I_1 \conn I_2$. (A symmetric definition holds for an $I_2 \in \mathcal{I}_2$.)
\end{definition}
It is a simple corollary of \cref{prop:unfavorable} that for any relevant sequence pair, \emph{at least one} player has \emph{at most one} critical information set for the opponent's sequence. We call such a player \emph{critical} for that relevant sequence pair.

\subsection{A Polynomial-Time Algorithm that Decomposes $\Xi$ using Scaled Extensions}\label{sec:unroll algo}

In this section, we present the central result of the paper: an efficient algorithm that expresses $\Xi$ as a chain of scaled extensions of simpler sets. In particular, as we have already seen in \cref{sec:example unroll}, each set in the decomposition is either a simplex (when \emph{splitting} an already-filled-in entry) or the singleton set $\{1\}$ (when \emph{summing} already filled-in entries and assigning the result to a new entry of $\vec{\xi}$).

The algorithm consists of a recursive function, $\textsc{Decompose}$, which takes three arguments: a relevant sequence pair $(\sigma_1, \sigma_2)$, a subset $\mathcal{S}$ of the set of all relevant sequence pairs, and a set $\mathcal{D}$ of vectors with entries indexed by the elements in $\mathcal{S}$. $\mathcal{S}$ represents the set of indices of $\vec{\xi}$ that have already been filled in, while $\mathcal{D}$ is the set of all partially-filled-in correlation plans (see \cref{sec:example unroll}). The decomposition for the whole polytope $\Xi$ is obtained by evaluating $\textsc{Decompose}((\emptyseq,\emptyseq), \mathcal{S} = \{(\emptyseq,\emptyseq)\}, \mathcal{D} = \{ (1) \})$, which corresponds to the starting situation in which only the entry $\xi[\emptyseq,\emptyseq]$ has been filled in (with the value 1 as per \cref{def:xi}). Each call to \textsc{Decompose} returns a pair $(\mathcal{S}',\mathcal{D}')$ of updated indices and partial vectors, to reflect the new entries that were filled in during the call.
Each call to $\textsc{Decompose}((\sigma_1, \sigma_2), \mathcal{S}, \mathcal{D})$ works as follows:
\begin{itemize}[leftmargin=*,nolistsep,itemsep=0mm]
  \item First, the algorithm finds one critical player for the relevant sequence pair $(\sigma_1,\sigma_2)$ (see end of \cref{sec:unfavorable}). Assume without loss of generality that Player 1 is critical (the other case is symmetric), and let $\mathcal{I}^* \subseteq \mathcal{I}_1$ be the set of critical information sets for $\sigma_2$ that belong to Player 1. By definition of critical player, $\mathcal{I}^*$ is either a singleton or it is an empty set.
  \item For each $I \in \mathcal{I}_1$ such that $\sigma(I) = \sigma_1$ and $I \rele \sigma_2$, we:
       \begin{itemize}[nolistsep,itemsep=0mm]
           \item Fill in all entries $\{\xi[(I^*,a),\sigma_2]:a\in A_{I}\}$ by splitting $\xi[\sigma_1,\sigma_2]$. This is reflected by updating the set of filled-in-indices $\mathcal{S} \gets \mathcal{S} \cup \{((I,a), \sigma_2)\}$ and extending $\mathcal{D}$ via a scaled extension: $\mathcal{D} \gets \mathcal{D} \ext^h \symp{|A_{I}|}$ where $h$
               extracts $\xi[\sigma_1,\sigma_2]$ from any partially-filled-in vector.
           \item Then, for each $a\in A_{I}$ we assign $(\mathcal{S}, \mathcal{D}) \gets \textsc{Decompose}(((I, a), \sigma_2), \mathcal{S}, \mathcal{D})$.
       \end{itemize}
       After this step, all the indices in $\{(\sigma_1',\sigma_2'): \sigma_1' \succ \sigma_1,\sigma'_2 \succeq \sigma_2\} \cup \{(\sigma_1,\sigma_2)\}$ have been filled in, and none of the indices in $\{(\sigma_1, \sigma'_2): \sigma'_2 \succ \sigma_2\}$ have been filled in yet.
    \item Finally, we fill out all indices in $\{(\sigma_1, \sigma'_2): \sigma'_2 \succ \sigma_2\}$. We do so by iterating over all information sets $J \in \mathcal{I}_2$ such that $\sigma(J) \succeq \sigma_2$ and $\sigma_1 \rele J$. For each such $J$, we split into two cases, according to whether $\mathcal{I}^* = \{I^*\}$ (for some $I^*\in \mathcal{I}_1$, as opposed to $\mathcal{I}^*$ being empty) and $J \conn I^*$, or not:
        \begin{itemize}[nolistsep,itemsep=0mm]
            \item If $\mathcal{I}^* = \{I^*\}$ and $J \conn I^*$, then for all $a \in A_J$ we fill in the sequence pair $\xi[\sigma_1, (J,a)]$ by assigning its value in accordance with the constraint $\xi[\sigma_1, (J,a)] = \sum_{a^* \in A_{I^*}} \xi[(I^*, a^*), (J,a)]$ via the scaled extension $\mathcal{D} \gets \mathcal{D} \ext^h \{ 1\}$ where the linear function $h$ maps a partially-filled-in vector to the value of $\sum_{a^* \in A_{I^*}} \xi[(I^*, a^*), (J,a)]$. 
            \item Otherwise, we fill in the entries $\{\xi[\sigma_1, (J,a)] : a\in A_J\}$, by splitting the value $\xi[\sigma_1, \sigma(J)]$. In other words, we let
                $\mathcal{D} \gets \mathcal{D} \ext^h \symp{|A_J|}$ where $h$ extracts the entry $\xi[\sigma_1, \sigma(J)]$ from a partially-filled-in vector in $\mathcal{D}$.
        \end{itemize}
        \item At this point, all the entries corresponding to indices $\tilde{\mathcal{S}} = \{(\sigma'_1, \sigma'_2): \sigma'_1 \succeq \sigma_1, \sigma'_2 \succeq \sigma_2\}$ have been filled in, and we return $(\mathcal{S} \cup \tilde{\mathcal{S}}, \mathcal{D})$.
\end{itemize}
  Every call to \textsc{Decompose} increases the cardinality of $\mathcal{S}$ by at least one unit. Since $\mathcal{S}$ is a subset of the set of relevant sequence pairs, and since the total number of relevant sequence pair is polynomial in the input game tree size, the algorithm runs in polynomial time. See \cref{app:algo} for pseudocode, as well as a proof of correctness of the algorithm.
   Since every change to $\mathcal{D}$ is done via scaled extensions (with either a simplex or the singleton set $\{1\}$), we conclude that:
\begin{theorem}\label{thm:decomposition}
  In a two-player perfect-recall EFG with no chance moves, the space of correlation plans $\Xi$ can be expressed via a sequence of scaled extensions with simplexes and singleton sets:
\begin{equation}\label{eq:xi decomposition}
  \Xi = \{1\} \ext^{h_1} \cX_1 \ext^{h_2} \cX_2 \ext^{h_3} \cdots \ext^{h_n} \cX_n, \text{ where, for } i = 1,\dots, n,\text{ either } \cX_i = \Delta^{s_i}\text{ or } \cX_i = \{1\},
\end{equation}
and $h_i(\cdot) = \langle \vec{a_i},\cdot\rangle$ is a linear function.
 Furthermore, an exact algorithm exists to compute such expression in polynomial time.
\end{theorem}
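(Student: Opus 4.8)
The plan is to show that the recursive \textsc{Decompose} procedure of \cref{sec:unroll algo}, when invoked as $\textsc{Decompose}((\emptyseq,\emptyseq),\{(\emptyseq,\emptyseq)\},\{(1)\})$, terminates in polynomial time and returns a set $\mathcal{D}$ equal to $\Xi$, realized as a left-associative chain of scaled extensions of the form in \cref{eq:xi decomposition}. Since every update to $\mathcal{D}$ is a scaled extension against either a simplex $\symp{s_i}$ or the singleton $\{1\}$, and since each map $h_i$ either reads off a single already-filled coordinate or sums several of them (hence is of the form $\langle\vec{a_i},\cdot\rangle$), the structural claim in \cref{eq:xi decomposition} follows immediately once correctness is established; the lemma following \cref{def:scaled extension} guarantees that each intermediate set stays nonempty, compact and convex, so the whole chain is well defined.

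I would prove correctness by induction on the recursion tree, establishing the invariant announced in \cref{sec:unroll algo}: a call $\textsc{Decompose}((\sigma_1,\sigma_2),\mathcal{S},\mathcal{D})$ fills in exactly the relevant pairs of the block $\tilde{\mathcal{S}}=\{(\sigma_1',\sigma_2'):\sigma_1'\succeq\sigma_1,\ \sigma_2'\succeq\sigma_2\}$ and returns $(\mathcal{S}\cup\tilde{\mathcal{S}},\mathcal{D}')$, where $\mathcal{D}'$ consists of all vectors extending those of $\mathcal{D}$ to the coordinates of $\tilde{\mathcal{S}}$ that satisfy every constraint of \cref{def:xi} supported on those coordinates. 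Two properties must be checked at the inductive step. First, \emph{coverage and disjointness}: the recursive calls on $((I,a),\sigma_2)$, ranging over the (critical-player) information sets $I$ with $\sigma(I)=\sigma_1$ relevant to $\sigma_2$ and over $a\in A_I$, cover exactly the strict-descendant sub-block $\{(\sigma_1',\sigma_2'):\sigma_1'\succ\sigma_1,\ \sigma_2'\succeq\sigma_2\}$; these sub-blocks are pairwise disjoint and disjoint from the row $\{(\sigma_1,\sigma_2'):\sigma_2'\succ\sigma_2\}$ filled in the final pass, because perfect recall makes the parent sequence of every $(I,a)$ unique. Second, \emph{constraint matching}: each split $\mathcal{D}\ext^h\symp{|A_I|}$ enforces precisely one probability-mass constraint of \cref{def:xi}, while the row-filling pass either enforces a constraint by a split or pins an entry by a singleton extension, with the skipped constraints shown to be redundant.

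The technical heart---and the step I expect to be the main obstacle---is this redundancy argument, exactly as in the worked example of \cref{sec:example unroll}. When the row pass reaches a Player-2 information set $J$ with $\mathcal{I}^*=\{I^*\}$ and $J\conn I^*$, the entries $\xi[\sigma_1,(J,a)]$ are \emph{not} free: a $\{1\}$-extension pins each to $\sum_{a^*\in A_{I^*}}\xi[(I^*,a^*),(J,a)]$, and I must verify that the un-imposed constraint $\sum_{a\in A_J}\xi[\sigma_1,(J,a)]=\xi[\sigma_1,\sigma(J)]$ then holds automatically. This is a telescoping identity: summing over $a^*\in A_{I^*}$ the Player-2 constraints already enforced in the sub-blocks below $(I^*,a^*)$, exchanging the order of summation, and invoking the Player-1 split constraint reproduces the skipped constraint, mirroring $\xi[\emptyseq,1]+\xi[\emptyseq,2]=(\xi[1,1]+\xi[1,2])+(\xi[2,1]+\xi[2,2])=\xi[1,\emptyseq]+\xi[2,\emptyseq]=\xi[\emptyseq,\emptyseq]$ from the example. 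Crucially, the fact that at most \emph{one} information set $I^*$ of the critical player is relevant here---precisely \cref{prop:unfavorable}---guarantees a \emph{single} consistent value to assign, with no competing split able to violate it (the failure mode flagged in \cref{rem:unfavorable}). The symmetric case, with Player 2 critical, follows by swapping roles.

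Finally, for the running-time claim I would observe that each invocation of \textsc{Decompose} adds at least one new index to $\mathcal{S}$, that $\mathcal{S}$ always remains a subset of the relevant sequence pairs, and that the number of such pairs is at most $|\Sigma_1|\cdot|\Sigma_2|$, polynomial in the game-tree size; hence the total number of calls is polynomial. Determining the critical player and the at-most-one critical information set reduces to testing the connectedness relation $\conn$, which is decidable in polynomial time after precomputing root-to-node ancestor information, so the work per call is also polynomial. Combining the correctness invariant at the root call, which yields $\mathcal{D}=\Xi$, with this time bound gives the claimed polynomial-time exact algorithm; the full pseudocode and detailed verification are deferred to \cref{app:algo}.
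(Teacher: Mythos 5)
Your proposal is correct and follows essentially the same route as the paper's own proof: the recursive \textsc{Decompose} procedure analyzed by induction on the call tree (the paper's ``inductive contract,'' \cref{lem:contract}), with the skipped constraints recovered by exactly the paper's telescoping exchange-of-summation identity (which relies on $I^*\conn J$ implying $\sigma(J)\rele I^*$ so the inner constraints were in fact enforced), \cref{prop:unfavorable} ruling out competing critical information sets, and the same counting argument (each call adds at least one relevant pair, of which there are at most $|\Sigma_1|\cdot|\Sigma_2|$) for polynomial time. The only cosmetic difference is that you fold soundness and completeness into a single per-call ``exactness'' invariant, whereas the paper proves $\mathcal{D}'$ satisfies the constraints via (Post4) and establishes $\Xi\subseteq\mathcal{D}$ in a separate terminal lemma---the content is the same.
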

 We can recursively use \cref{algo:rm} on the expression~\eqref{eq:xi decomposition} to obtain a regret minimizer for $\Xi$. The resulting algorithm, shown in \cref{algo:full rm} of \cref{app:full rm algo}, is contingent on a choice of ``local'' regret minimizers $\text{RM}_i$ for each of the simplex domains $\Delta^{s_i}$ in~\eqref{eq:xi decomposition}. By virtue of~\cref{algo:rm}, if each local regret minimizer $\text{RM}_i$ for $\Delta^{s_i}$ runs in linear time (i.e., computes recommendations and observes losses by running an algorithm whose complexity is linear in $s_i$)\footnote{Linear-time regret minimizers for simplexes include regret-matching~\citep{Hart00:Simple}, regret-matching-plus~\citep{Tammelin15:Solving}, mirror-descent and follow-the-regularized-leader (e.g, \citet{Hazan16:Introduction}).}, then the overall regret minimization algorithm for $\Xi$ runs in linear time in the number of relevant sequence pairs of the game. Furthermore, \cref{prop:regret bound} immediately implies that if each $\text{RM}_i$ is Hannan consistent, then so is our overall algorithm for $\Xi$. Putting these observations together, we conclude:

\begin{restatable}{theorem}{thmrmproperties}
  For any two-player extensive-form game with no chance moves, there exists a Hannan consistent regret minimizer for $\Xi$ that runs in linear time in the number of relevant sequence pairs.
\end{restatable}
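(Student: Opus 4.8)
The plan is to assemble the theorem from the two main ingredients already established: the structural decomposition of $\Xi$ (\cref{thm:decomposition}) and the regret bound for a single scaled extension (\cref{prop:regret bound}). First I would invoke \cref{thm:decomposition} to write $\Xi = \{1\} \ext^{h_1} \cX_1 \ext^{h_2} \cdots \ext^{h_n} \cX_n$, with each $\cX_i$ either a simplex $\Delta^{s_i}$ or the singleton $\{1\}$ and each $h_i$ linear, a decomposition that is computable in polynomial time. To every factor I would attach a ``local'' Hannan-consistent regret minimizer: for each simplex $\Delta^{s_i}$ I would use any linear-time, Hannan-consistent minimizer (e.g., regret matching), and for each singleton $\{1\}$ the trivial minimizer that always outputs $1$ and incurs zero regret. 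Composing these local minimizers by applying \cref{algo:rm} recursively---once per scaled-extension operator, in the same left-associative order as the decomposition---yields a single regret minimizer $\text{RM}_\Xi$ for $\Xi$.

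Next I would establish Hannan consistency by unrolling the regret bound of \cref{prop:regret bound}. Applied once to $\cZ = \cX \ext^h \cY$, the proposition gives $R_\cZ^T \le R_\cX^T + h^\ast R_\cY^T$ with $h^\ast = \max_{\vec x \in \cX} h(\vec x)$. Applying this inequality recursively along the chain produces a bound of the form $R_\Xi^T \le \sum_{i=1}^n c_i\, R_i^T$, where each coefficient $c_i$ is a finite product of the per-step constants $h_j^\ast$. These constants are finite because each $h_j$ is affine and each domain is compact; concretely, the functions in the decomposition either extract a single entry of the (sub-probability) correlation plan or sum a bounded number of such entries, so each $h_j^\ast$ is bounded by a small constant and hence every $c_i$ is a fixed constant independent of $T$. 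Since each local minimizer is Hannan consistent, $R_i^T = o(T)$, and a finite constant-weighted sum of sublinear quantities is sublinear; therefore $R_\Xi^T = o(T)$, which is exactly Hannan consistency.

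Finally I would verify the running-time claim. The per-iteration cost of \cref{algo:rm} for $\cZ = \cX \ext^h \cY$ is one \textsc{Recommend}/\textsc{ObserveLoss} pass for the $\cX$-minimizer plus one for the $\cY$-minimizer, together with a constant amount of bookkeeping to split the loss and to rescale $\vec y$ by $h(\vec x)$ (this is exactly the splitting of linear minimization displayed just before \cref{prop:regret bound}). Under the recursive composition the total per-iteration cost telescopes to the sum of the per-iteration costs of all the local minimizers, namely $O(\sum_i s_i + n)$ when each $\Delta^{s_i}$ uses a linear-time minimizer and each singleton is handled in constant time. Because each simplex factor fills in $s_i$ new entries of $\vec\xi$ and each singleton fills in exactly one, $\sum_i s_i + n$ equals, up to a constant factor, the number of entries of $\vec\xi$---that is, the number of relevant sequence pairs---so each iteration runs in linear time in that quantity.

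I expect the only real subtlety to be the Hannan-consistency bookkeeping: one must check that the recursively accumulated coefficients $c_i$ remain bounded by constants that do not depend on $T$ and do not blow up with the depth $n$ of the chain in a way that would spoil sublinearity. This reduces to bounding each $h_j^\ast$, which follows from the explicit form of the $h_j$ guaranteed by \cref{thm:decomposition} (entry extraction, or a bounded sum of entries of a vector whose coordinates lie in $[0,1]$). Everything else is a direct, mechanical composition of the two cited results.
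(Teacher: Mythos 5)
Your proposal is correct and follows essentially the same route as the paper: decompose $\Xi$ via \cref{thm:decomposition}, attach a linear-time Hannan-consistent minimizer to each simplex factor (and the trivial zero-regret minimizer to each singleton $\{1\}$), compose them recursively with \cref{algo:rm}, and combine the telescoped regret bound of \cref{prop:regret bound} with the per-iteration cost accounting over the filled-in entries. Your extra bookkeeping on the accumulated coefficients $c_i$ is sound but not needed here, since every $h_j$ produced by the decomposition takes values in $[0,1]$ on its domain (each extracted entry, and each constrained sum $\sum_{a^*\in A_{I^*}}\xi[(I^*,a^*),(J,a)]$, is at most $\xi[\emptyseq,\emptyseq]=1$), so $h_j^\ast \le 1$ and the paper simply bounds the total regret by the plain sum of the local regrets.
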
 
    \section{Experimental Evaluation}

We experimentally evaluate the scalability of our regret-minimization algorithm for computing an extensive-form correlated equilibrium. In particular, we implement a regret minimizer for the space of correlation plans by computing the structural decomposition of $\Xi$ into a chain of scaled extensions (\cref{sec:unroll algo}) and repeatedly applying the construction of \cref{sec:circuit}. This regret minimizer is then used on the saddle-point formulation of an EFCE (\cref{sec:efce}) as explained in \cref{sec:regret}, with two modifications that are standard in the literature on regret minimization algorithms for game theory~\citep{Tammelin15:Solving,Burch19:Revisiting}: (i) alternating updates and (ii) linear averaging of the iterates\footnote{The linear average of $n$ vectors $\vec{\xi}_1,\dots,\vec{\xi}_n$ is $(\sum_{t=1}^n t\cdot \vec{\xi}_t)/(\sum_{t=1}^n t) = 2(\sum_{t=1}^n t\cdot \vec{\xi}_t)/(n(n+1))$.}.
\begin{wraptable}{r}{5.5cm}
        \centering\fontsize{7}{9}\selectfont
        \setlength\tabcolsep{1.15mm}
        \begin{tabular}{ccc|ccc}
          \toprule
          Board & {Num} & {Ship} & \multirow{2}{*}{$|\Sigma_1|$} &\multirow{2}{*}{$|\Sigma_2|$}&Num. rel.\\[-.5mm]
          size&turns&length&&&seq. pairs\\
          \midrule
          (3, 2) & $3$ & 1 & 15k & 47k & 3.89M\\
          (3, 2) & $4$ & 1  & 145k & 306k & 26.4M\\
          (3, 2) & $4$ & 2  & 970k & 2.27M  &    111M\\
          \bottomrule
        \end{tabular}
        \caption{\small Game metrics for the different instances of the Battleship game we test on.}
        \label{tab:game sizes}
        \vspace{-3.5mm}
\end{wraptable}
We use \emph{regret-matching-plus}~\citep{Tammelin15:Solving} to minimize the regret over the simplex domains in the structural decomposition. These variants are known to be beneficial in the case of Nash equilibrium, and we observed the same for EFCE. We compare our algorithm to two known algorithms in the literature. The first is based on linear programming~\citep{Stengel08:Extensive}.
The second is a very recent subgradient descent algorithm for this problem~\citep{Farina19:Correlation}, which leverages a recent subgradient descent technique~\citep{Wang13:Incremental}. All algorithms were run on a machine with 16 GB of RAM and an Intel i7 processor with 8 cores. We used the Gurobi commercial solver (while allowing it to use any number of threads) to solve the LP when evaluating the scalability of the LP-based method proposed by~\citet{Stengel08:Extensive}.

\textbf{Game instances.} We test the scalability of our algorithm in a benchmark game for EFCE that was recently proposed by~\citet{Farina19:Regret}: a parametric variant of the classical war game \emph{Battleship}. \cref{tab:game sizes} shows some statistics about the three game instances that we use, including the number of relevant sequence pairs in the game (\cref{def:relevant seq pair}). `Board size' refers to the size of the Battleship playfield; each player has a field of that size in which to place his ship. `Num turns' refers to the maximum number of shots that each player can take (in turns). `Ship length' is the length of the one ship that each player has. Despite the seemingly small board sizes and the presence of only one ship per player, the game trees for these instances are quite large, with each player having tens of thousands to millions of sequences.

\begin{figure}[t]
    \centering
    \includegraphics[scale=.84]{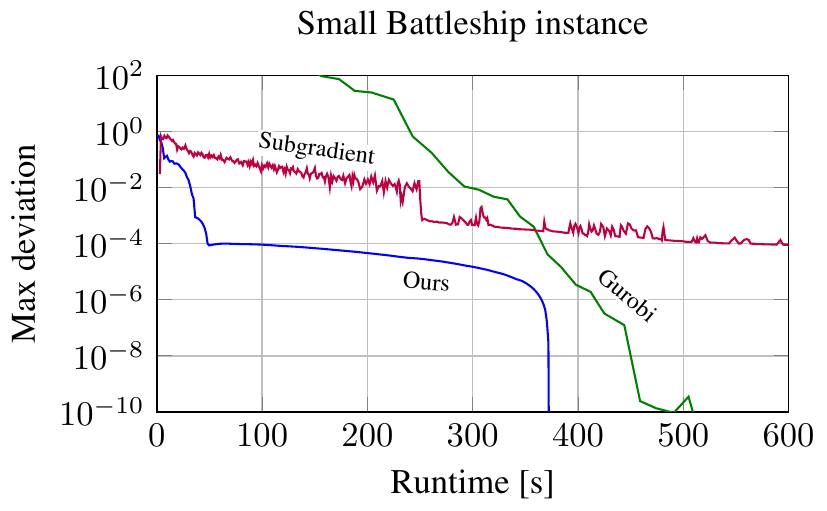}\hfill
    \includegraphics[scale=.84]{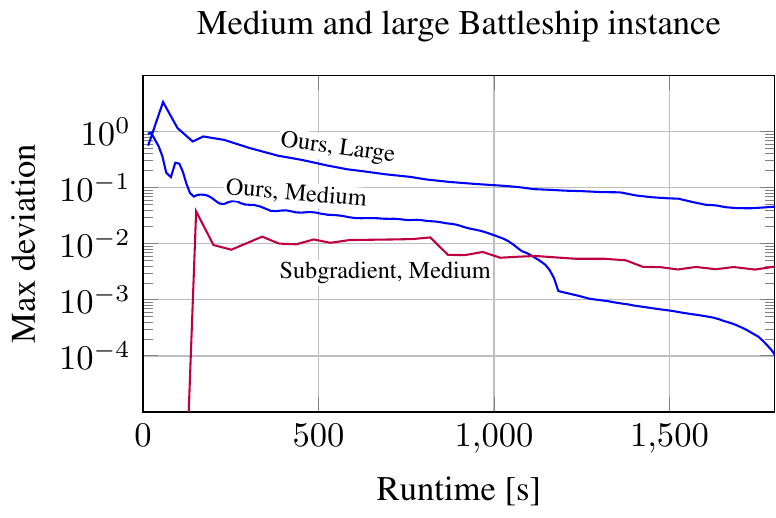}\vspace{-3mm}
    \caption{\small Experimental results. The y-axis shows the maximum utility increase upon deviation.}
    \label{fig:results}
\end{figure}

\textbf{Scalability of the Linear Programming Approach~\citep{Stengel08:Extensive}.} Only the small instance could be solved by Gurobi, \cref{fig:results} (left). (Out of the LP algorithms provided by Gurobi, the barrier method was faster than the primal- and dual-simplex methods.) On the medium and large instance, Gurobi was killed by the system for trying to allocate too much memory.~\citet{Farina19:Correlation} report that the large instance needs more than 500GB of memory in order for Gurobi to run. The Gurobi run time shown in \cref{fig:results} does not include the time needed to construct and destruct the Gurobi LP objects, which is negligible.

\textbf{Scalability of the Very Recent Subgradient Technique~\citep{Farina19:Correlation}.}
The very recent subgradient descent algorithm for this problem was able to solve the small and medium instances if the algorithm's step size was tuned well. An advantage of our technique is that it has no parameters to tune. Another issue is that the iterates $\Xi$ of the subgradient algorithm are not feasible while ours are. Furthermore, on the large instance, the subgradient technique was already essentially unusable because each iteration took over an hour (mainly due to computing the projection).

\cref{fig:results} shows the experimental performance of the subgradient descent algorithm. We used a step size of $10^{-3}$ in the small instance and of $10^{-6}$ in the medium instance. Since the iterates produced by the subgradient technique are not feasible, extra care has to be taken when comparing the performance of the subgradient method to that of our approach or Gurobi. Figure~\ref{fig:subgradient infeasibility} in \cref{app:subgradient} reports the infeasibility of the iterates produced by the subgradient technique over time.

\textbf{Scalability of Our Approach.} We implemented the structural decomposition algorithm of \cref{sec:unroll algo}. Our parallel implementation using 8 threads has a runtime of 2 seconds on the small instance, 6 seconds on the medium instance, and 40 seconds on the large instance (each result was averaged over 10 runs). Finally, we evaluated the performance of the regret minimizer constructed according to~\cref{sec:circuit}; the results are in \cref{fig:results} (left) for the small instance and \cref{fig:results} (right) for the medium and large instance. The plots do not include the time needed to construct and destruct the regret minimizers in memory, which again is negligible. As expected, on the small instance, the rate of convergence of our regret minimizer (a first-order method) is slower than that of the barrier method (a second-order method). However, the barrier method incurs a large overhead at the beginning, since Gurobi spends time factorizing the constraint matrix and computing a good ordering of variables for the elimination tree. The LP-based approach could not solve the medium or large instance, while ours could.  Even on the largest instance, no more than 2GB of memory was reserved by our algorithm.

    \section{Conclusions}

We introduced the
      first efficient regret minimization algorithm for finding an extensive-form
      correlated equilibrium in large two-player general-sum games with no
      chance moves. This is more challenging than designing an algorithm for Nash
      equilibrium because the constraints that define the
      space of correlation plans lack the hierarchical structure of
      sequential strategy spaces and might even form cycles. We showed that some of the constraints are redundant and can be excluded from consideration, and presented an efficient algorithm that generates the space of extensive-form correlation plans incrementally from the remaining constraints. We achieved this decomposition via a special
      convexity-preserving operation that we coined \emph{scaled extension}.
      We showed that a regret minimizer can be designed for a scaled extension of any two convex sets, and that from the decomposition we then obtain a global regret minimizer.
      Our algorithm produces feasible iterates. Experiments showed that it significantly outperforms prior approaches---the LP-based approach and a very recent subgradient descent algorithm---and for larger problems it is the only viable option. 

    \section*{Acknowledgments}
    This material is based on work supported by the National
    Science Foundation under grants IIS-1718457, IIS-1617590,
    and CCF-1733556, and the ARO under award W911NF-17-1-0082. Gabriele Farina is supported by a Facebook fellowship. Co-authors Ling and Fang are supported in part by a research grant from Lockheed Martin.

    \bibliography{dairefs/dairefs}
    \bibliographystyle{custom_arxiv}

\iftrue
    \clearpage
    \appendix
    \section{Saddle-Point Formulation of EFCE}\label{app:efce saddle point}

In this section, we recall the saddle point formulation of EFCE introduced by
\citet{Farina19:Correlation}. Before we do so, we introduce the following notation:
\begin{itemize}[nolistsep,itemsep=1mm]
  \item $\seqf{i}$ denotes the sequence-form polytope of Player $i$. As mentioned in the body of the paper, this is the set of sequence-form strategies, that is
      \[
        \seqf{i} \defeq \mleft\{\vec{y} \in \bbR^{|\Sigma_i|}_+ : y[\emptyseq] = 1, \sum_{a \in A_I} y[(I, a)] = y[\sigma(I)]\quad\forall I \in \mathcal{I}_i\mright\}.
      \]
  \item Given a terminal node $z$, we denote with $\sigma_i(z)$ the last information set-action pair $(I, a)$, with $I \in \mathcal{I}_i$ and $a \in A_I$ that is encountered on the path from the root of the game tree to $z$.
  \item Given a terminal node $z$, we denote with $u_i(z)$ the utility of Player $i$ should the game end at $z$.
  \item Given a $\vec{\xi} \in \Xi$, we use the notation $\xi_1[\sigma ; z]$, where $\sigma \in \Sigma_i$ and $z$ is a terminal node, to mean $\xi[\sigma, \sigma_2(z)]$. Analogously, we use $\xi_2[\sigma ; z]$, where $\sigma\in\Sigma_2$, to mean $\xi[\sigma_1(z), \sigma]$.
  \item Given an information set $I$ for Player $i$, we denote with $Z_I$ the set of terminal nodes $z$ that include any node $v \in I$ on their path from the root of the game tree to $z$.
  \item Similarly, given a sequence $\sigma = (I, a) \in \Sigma_i$, we denote with $Z_\sigma$ the set of terminal nodes $z$ such that $\sigma(z) \succeq \sigma$. Intuitively, these are the terminal state of the game that can only be reached if Player $i$ plays action $a$ at information set $I$.
\end{itemize}

The main idea in the construction of~\citet{Farina19:Correlation} is that a correlation plan $\xi$ is an EFCE if and only if, for all player $i\in\{1, 2\}$, sequence $\sigma^* = (I^*, a^*)\in\Sigma_i$, and sequence-form strategy $\vec{y}^* \in \seqf{i}$ such that $y^*[\sigma(I^*)] = 1$ it holds that
\begin{align}\label{eq:efce incentives}
    \sum_{z\in Z_{I^*}} u_i(z) \xileaf{\sigma^*}{z} y^*(\sigma_i(z)) \le \sum_{z\in Z_{\sigma^*}} u_i(z) \xileaf{\sigma_i(z)}{z}.
\end{align}

Inequality~\eqref{eq:efce incentives} is in the form $\vec{\xi}^{\!\top}\!\! \mat{A}_{i, \sigma^*} \vec{y}_{i, \sigma^*} - \vec{b}_{i, \sigma^*}^{\!\top} \vec{\xi} \le 0$ where $\mat{A}_{i, \sigma^*}$ and $\vec{b}_{i, \sigma^*}$ are suitable matrices/vectors that only depends on the choice of player and sequence $\sigma^*$. Hence, an EFCE $\vec{\xi}$ is given by
\[
  \argmin_{\vec{\xi}\,\in\,\Xi}\max_{
    \substack{i \in \{1, 2\}\\
    \sigma^* = (I^*, a^*) \in \Sigma_i}}
        \mleft\{ \max_{\substack{\vec{y^*}\in Q_i\\y^*[\sigma(I^*)]=1}} \vec{\xi}^{\!\top}\!\! \mat{A}_{i,\sigma^*} \vec{y}^* - \vec{b}_{i,\sigma^*}^{\!\top} \vec{\xi} \mright\}.
\]

Let $\tilde{Q}_{i,I^*} \defeq \{\vec{y}^* \in \seqf{i} : y^*[\sigma(I^*)] = 1\}$. The set $\tilde{Q}_{i,I^*}$ is fundamentally equivalent to a treeplex rooted at $\sigma(I^*)$ instead of the empty sequence $\emptyseq$. As such, an any efficient regret minimizer for a treeplex (such as CFR+) can be applied to $\tilde{Q}_{i,I^*}$. In order to deal with the outer maximization, we note that the outer maximization is over a finite domain; hence, it can be converted to a unique maximization problem by introducing auxiliary nonnegative variables $\vec{\lambda} = (\lambda_{i,\sigma^*})_{i\in\{1,2\},\sigma^* \in \Sigma_i}$ such that $\vec{\lambda} \in \Delta^n$, the $n$-dimensional simplex where $n = |\{(i,\sigma^*): i\in\{1, 2\}, \sigma^* \in \Sigma_i\}|$. Therefore, an EFCE is given by
\[
\argmin_{\vec{\xi}\,\in\,\Xi}\max_{
    \substack{\vec{\lambda}\in\symp{n}\\
    {\vec{y}}_{i,\sigma^*} \in \tilde{Q}_{i,I^*}}}
        \mleft\{ \sum_{\substack{i\in\{1,2\}\\\sigma^* = (I^*, a^*)\in \Sigma_i}} \lambda_{i,\sigma^*}(\vec{\xi}^{\!\top}\!\! \mat{A}_{i,\sigma^*} {\vec{y}}_i - \vec{b}_{i,\sigma^*}^{\!\top} \vec{\xi}) \mright\}.
\]

Finally, the change of variable $\tilde{\vec{y}}_{i,\sigma^*} \defeq \lambda_{i,\sigma^*} \vec{y}_{i,\sigma^*} \in \lambda_{i,\sigma^*}\tilde{Q}_{i,I^*}$ reveals that an EFCE is the solution of the problem
\[
\argmin_{\vec{\xi}\,\in\,\Xi}\max_{
    \substack{\vec{\lambda}\in\symp{n}\\
    \tilde{\vec{y}}_{i,\sigma^*} \in \lambda_{i,\sigma^*}\tilde{Q}_{i,I^*}}}
        \mleft\{ \sum_{\substack{i\in\{1,2\}\\\sigma^* = (I^*, a^*)\in \Sigma_i}} \vec{\xi}^{\!\top}\!\! \mat{A}_{i,\sigma^*} \tilde{\vec{y}}_i - \lambda_{i,\sigma^*}\vec{b}_{i,\sigma^*}^{\!\top} \vec{\xi} \mright\},
\]
which is a bilinear saddle-point problem. An efficient regret minimizer for the domain of the maximization can be constructed by applying the convex-hull construction of~\citet{Farina19:Regret}.

As discussed by~\citet{Farina19:Correlation}, the above argument can be slightly modified to include the constraint that the EFCE $\xi$ achieve social welfare $\ge \tau$, for any given $\tau \in \bbR$.

    \section{Scaled Extension Operation}\label{app:scaled extension}

\lemscaledextension*
\begin{proof}
      Let $\cZ := \displaystyle\cX \ext^h \cY$. We break the proof into three parts:
    \begin{itemize}[leftmargin=7mm]
    \item (Non-emptiness) Since $\cX$ and $\cY$ are nonempty by hypothesis, let $\vec{x}$ and $\vec{y}$ be arbitrary points in $\cX$ and $\cY$. The element $(\vec{x}, h(\vec{x})\vec{y})$ belongs to $\cZ$ and therefore $\cZ$ is nonempty.

    \item (Compactness) We now prove that $\cZ$ is a compact set, by proving that it is bounded and closed (and applying the Heine-Borel theorem). First, we argue that $\cZ$ is bounded. Indeed, note that $h$ is affine and therefore continuous, and since $\cX$ is compact we conclude by Weierstrass' theorem that $h^* \defeq \max_{\vec{x}\in\cX} h(\vec{x})$ exists and is finite. Hence,
    $
      \cZ \subseteq \cX \times (h^* \cY) \subseteq \max\{1, h^\ast\} (\cX \times \cY)
    $
    and since both $\cX$ and $\cY$ are compact, we conclude that $\cZ$ is bounded. Now, we argue that $\cZ$ is (sequentially) closed. Indeed, let $\{\vec{z}_i\} \to \bar{\vec{z}}$ be a convergent sequence such that $\vec{z}_i \in \cZ$ for $i=1, 2,\dots$; we will prove that $\bar{\vec{z}} \in \cZ$. By definition of $\cZ$, for all $i$ it must be $\vec{z}_i = (\vec{x}_i, h(\vec{x}_i) \vec{y}_i)$ for some $(\vec{x}_i, \vec{y}_i) \in \cX \times \cY$. Since $\{\vec{z}_i\}$ converges, then the sequences $\{\vec{x}_i\}$ and $\{h(\vec{x}_i) \vec{y}_i\}$ must also converge. Let $\bar{\vec{x}} \defeq \lim \vec{x}_i$; by closedness of $\cX$, it must be $\bar{\vec{x}}\in \cX$. Furthermore, by continuity of $h$, $\lim h(\vec{x}_i) = h(\bar{\vec{x}})$. Now, using the (sequential) compactness of $\cY$, we can assume without loss of generality that $\{\vec{y}_i\}$ converges\footnote{Or else, extract a convergent subsequence.}; let $\cY \ni \bar{\vec{y}} \defeq \lim \vec{y}_i$. By the usual properties of limits,
    $
      \bar{\vec{z}} = \lim \vec{z}_i = \lim\ (\vec{x}_i, h(\vec{x}_i) \vec{y}_i) = (\bar{\vec{x}}, h(\bar{\vec{x}}) \bar{\vec{y}}),
    $
    and since $\bar{\vec{x}}\in \cX, \bar{\vec{y}}\in\cY$ we have $\bar{\vec{z}} \in \cZ$ and $\cZ$ is sequentially closed.

    \item (Convexity) Since $\cZ$ was just proven to be compact, it is in particular closed and hence it will be enough to prove midpoint convexity to conclude convexity of $\cZ$. To this end, let $\vec{z}, \vec{z}'$ be any two points in $\cZ$. By definition of $\cZ$, there must exist $\vec{x},\vec{x}' \in \cX$ and $\vec{y},\vec{y}'\in \cY$ such that $\vec{z}=(\vec{x}, h(\vec{x})\vec{y})$ and $\vec{z}'=(\vec{x}', h(\vec{x}')\vec{y}')$. If $h(\vec{x}) = h(\vec{x}') = 0$, then the affinity of $h$ implies $h(\vec{x}/2 + \vec{x}'/2) = 0$ and therefore
    \begin{align*}
      \frac{\vec{z} + \vec{z}'}{2} = \mleft(\frac{\vec{x} + \vec{x}'}{2}, \vec{0}\mright) = \mleft(\frac{\vec{x} + \vec{x}'}{2},\ h\!\mleft(\frac{\vec{x} + \vec{x}'}{2}\mright) \vec{y}\mright) \in \cZ.
    \end{align*}On the other hand, if at least one between $h(\vec{x})$ and $h(\vec{x}')$ is strictly positive, then
    \begin{align*}
      \frac{\vec{z} + \vec{z}'}{2} &= \mleft(\frac{\vec{x} + \vec{x}'}{2}, \frac{h(\vec{x})\vec{y} + h(\vec{x}') \vec{y}'}{2}\mright) \\
        &= \mleft(\frac{\vec{x} + \vec{x}'}{2}, \frac{h(\vec{x}) + h(\vec{x}')}{2}\mleft[\frac{h(\vec{x})}{h(\vec{x}) + h(\vec{x}')}\vec{y} + \frac{h(\vec{x}')}{h(\vec{x}) + h(\vec{x}')}\vec{y}'\mright]\mright)\\
        &= \mleft(\frac{\vec{x} + \vec{x}'}{2},\ h\!\mleft(\frac{\vec{x}+\vec{x}'}{2}\mright) \mleft[\frac{h(\vec{x})}{h(\vec{x}) + h(\vec{x}')}\vec{y} + \frac{h(\vec{x}')}{h(\vec{x}) + h(\vec{x}')}\vec{y}'\mright]\mright),
    \end{align*}
    where the last equality follows from the fact that $h$ is an affine function. Since $h$ is non-negative, the convex combination in the square brackets belongs to $\cY$ and therefore $\vec{z}/2 + \vec{z}'/2 \in \cZ$. \qedhere
\end{itemize}
\end{proof}

\section{Regret Circuit for the Scaled Extension Operation}\label{app:regret circuit}

The expression for the cumulative regret of a generic sequence of decisions $(\vec{x}^t, h(\vec{x}^t) \vec{y}^t) \in \cZ$ for $t = 1, \dots, T$ is
\begin{align*}
  R^T_{\cZ} &= \sum_{t=1}^T \Big(\langle \vec{\ell}_{{x}}^t, \vec{x}^t \rangle + h(\vec{x}^t) \langle \vec{\ell}^t_y, \vec{y}^t\rangle\Big) - \min_{\substack{\hat{\vec{x}} \in \cX\\\hat{\vec{y}}\in \cY}} \mleft\{\sum_{t=1}^T\Big( \langle \vec{\ell}_{{x}}^t, \hat{\vec{x}} \rangle + h(\hat{\vec{x}}) \langle \vec{\ell}^t_y, \hat{\vec{y}}\rangle\Big) \mright\} \\
   &= \sum_{t=1}^T \Big(\langle \vec{\ell}_{{x}}^t, \vec{x}^t \rangle + h(\vec{x}^t) \langle \vec{\ell}^t_y, \vec{y}^t\rangle\Big) - \min_{\substack{\hat{\vec{x}} \in \cX\\\hat{\vec{y}}\in\cY}} \mleft\{\mleft(\sum_{t=1}^T \langle \vec{\ell}_{{x}}^t, \hat{\vec{x}} \rangle\mright) + h(\hat{\vec{x}}) \sum_{t=1}^T\langle \vec{\ell}^t_y, \hat{\vec{y}}\rangle \mright\}.
\end{align*}
Since $h$ is nonnegative, we can separately minimize the last sum $\sum_{t=1}^T \langle \vec{\ell}_{{y}}^t, \hat{\vec{y}}\rangle$, and obtain
\begin{align}
  R^T_{\cZ} &= \sum_{t=1}^T \!\Big(\langle \vec{\ell}_{{x}}^t, \vec{x}^t \rangle + h(\vec{x}^t) \langle \vec{\ell}^t_y, \vec{y}^t\rangle\Big) - \min_{\hat{\vec{x}} \in \cX} \!\mleft\{\!\mleft(\sum_{t=1}^T \langle \vec{\ell}_{{x}}^t, \hat{\vec{x}} \rangle\mright) + h(\hat{\vec{x}}) \min_{\hat{\vec{y}}\in \cY} \mleft\{ \sum_{t=1}^T\langle \vec{\ell}^t_y, \hat{\vec{y}}\rangle \mright\}\!\mright\}\!.\label{eq:regret 1}
\end{align}

This suggests the following natural idea: we let $\vec{y}^t$ be chosen by a regret minimizer for $\cY$ that observes losses $\vec{\ell}^t$ at each time $t$. This way, by definition of cumulative regret we have
\[
  \min_{\hat{\vec{y}} \in \cY} \mleft\{\sum_{t=1}^T\langle \vec{\ell}^t_y, \hat{\vec{y}}\rangle  \mright\} = -R^T_\cY + \sum_{t=1}^T\langle \vec{\ell}^t_y, \vec{y}^t\rangle,
\]
and substituting into~\eqref{eq:regret 1} we obtain
\begin{align*}
  R^T_{\cZ} &= \sum_{t=1}^T \Big(\langle \vec{\ell}_{{x}}^t, \vec{x}^t \rangle + h(\vec{x}^t) \langle \vec{\ell}^t_y, \vec{y}^t\rangle\Big) - \min_{\hat{\vec{x}} \in \cX} \mleft\{\sum_{t=1}^T \Big( \langle \vec{\ell}_{{x}}^t, \hat{\vec{x}} \rangle + h(\hat{\vec{x}}) \langle\vec{\ell}_{{y}}^t, \vec{y}^t\rangle \Big) - h(\hat{\vec{x}})R_\cY^T \mright\} \\
   &\le \sum_{t=1}^T \Big(\langle \vec{\ell}_{{x}}^t, \vec{x}^t \rangle + h(\vec{x}^t) \langle \vec{\ell}^t_y, \vec{y}^t\rangle\Big) \!-\! \min_{\hat{\vec{x}} \in \cX} \mleft\{\sum_{t=1}^T \Big( \langle \vec{\ell}_{{x}}^t, \hat{\vec{x}} \rangle + h(\hat{\vec{x}}) \langle\vec{\ell}_{{y}}^t, \vec{y}^t\rangle \Big)\!\mright\} \!+ \max_{\hat{\vec{x}}\in\cX}\! \Big\{ h(\hat{\vec{x}})R_\cY^T \Big\}.
\end{align*}

Since $h$ is affine, it can be expressed as $h: \vec{x} \mapsto \langle \vec{a}, \vec{x}\rangle + {b}$  for some vectors $\vec{a}, {b}$. Hence,
\begin{align*}
  R^T_{\cZ} &\le \sum_{t=1}^T \Big\langle \vec{\ell}_{{x}}^t + \langle \vec{\ell}_{{y}}^t, \vec{y}^t\rangle \vec{a}, \vec{x}^t\Big\rangle - \min_{\hat{\vec{x}} \in \cX} \mleft\{\sum_{t=1}^T  \Big\langle \vec{\ell}_{{x}}^t + \langle \vec{\ell}_{{y}}^t, \vec{y}^t\rangle \vec{a}, \hat{\vec{x}} \Big\rangle \mright\} + \max_{\hat{\vec{x}}\in\cX} \Big\{ h(\hat{\vec{x}})R_\cY^T \Big\}.
\end{align*}
The first two terms in the difference in the right-hand side correspond to the cumulative regret $R_\cX^T$ of a regret minimizer for $\cX$ that observes \begin{equation}\label{eq:tilde ell}
  \tilde{\vec{\ell}}_{{x}}^t \defeq \vec{\ell}_{{x}}^t + \langle \vec{\ell}_{{y}}^t, \vec{y}^t\rangle \vec{a}
\end{equation}
at all times $t$. Hence, by denoting $h^* \defeq \max_{\hat{\vec{x}}\in\cX} h(\hat{\vec{x}})$,\footnote{Since $h$ is affine and $\cX$ is compact, $h^\ast$ exists and is finite by Weierstrass' theorem.} we obtain
$
  R_\cZ^T \le R_\cX^T + h^\ast R_\cY^T.
$
In other words, as long as the regret minimizers for $\cX$ and $\cY$ are Hannan consistent, so is the regret minimizer defined by \cref{fig:circuit}.

\begin{figure}
\centering\includegraphics[scale=.8]{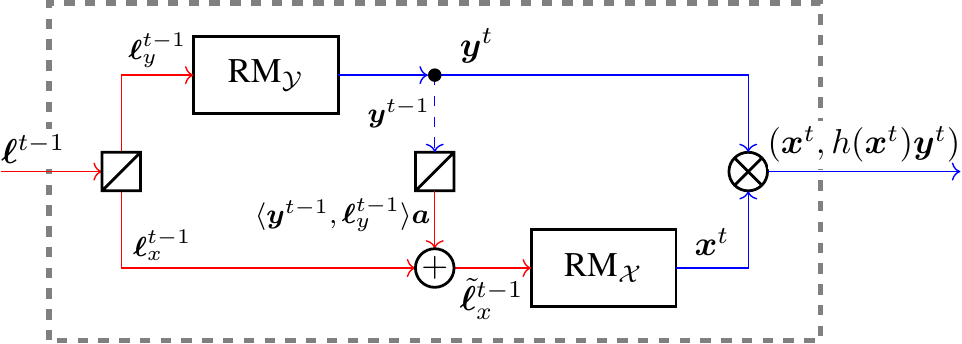}
\caption{Regret minimizer for the set $\displaystyle\cZ = \cX \ext^h \cY$. The affine function $h$ is expressed as $h : \cX \ni \vec{x} \mapsto \langle \vec{a}, \vec{x}\rangle + {b}$. The loss $\tilde{\vec{\ell}}_x^{t-1}$ is defined in~\eqref{eq:tilde ell}.}
\label{fig:circuit}
\end{figure}

\section{Proof of Proposition~\ref{prop:unfavorable}}\label{app:unfavorable}
\propunfavorable*
\begin{proof}
  For contradiction, assume that $I_1\conn I_2$ and $I_1' \conn I_2'$. By \cref{def:connected infosets}, there exists two pairs of connected nodes $(u, v) \in I_1\times I_2$ and $(u', v') \in I_1'\times I_2'$. Let $w$ be the lowest common ancestor of $u$ and $u'$. It cannot be that $w=u$ or $w=u'$, or it would not be true that $\sigma(I_1) = \sigma(I_1')=\sigma_1$. Also, $w$ cannot be a node for Player 1, or the game would not be perfect-recall. Hence, since the game does not have chance moves, $w$ must belong to Player 2. Since $w$ belongs to Player 2 and $\sigma(I_2)=\sigma(I_2')=\sigma_2$, it is not possible that $v, v'$ be descendants of $w$ (or the game would not be perfect-recall). Hence, it must be that both $v$ and $v'$ belong to the path from $w$ to the root of the game tree. But then one between $I_2$ and $I_2'$ must precede the other one, contradicting the fact that $\sigma(I_2) = \sigma(I_2') = \sigma_2$.
\end{proof} 
    \section{Polynomial-Time Algorithm that Decomposes $\Xi$ using Scaled Extensions}\label{app:algo}

    We propose pseudocode for the algorithm presented in \cref{sec:unroll algo} in \cref{algo:main}. We use the following conventions:
\begin{itemize}[leftmargin=*,nolistsep,itemsep=0mm]
  \item Given a player $i\in\{1,2\}$, we let $-i$ denote the opponent.
  \item We use the symbol $\sqcup$ to denote disjoint union.
  \item Given two infosets $I,I' \in \mathcal{I}_i$, we write $I \preceq I'$ if $\sigma(I') \succeq \sigma(I)$. We say that we iterate over a set $\mathcal{I} \subseteq \mathcal{I}_i$ \emph{in top-down order} if, given any two $I,I'\in\mathcal{I}$ such that $I\preceq I'$, $I$ appears before $I'$ in the iteration.
  \item We use the observation that for all $I\in\mathcal{I}_1$ and $\sigma_2\in\Sigma_2$, $I \rele \sigma_2$ if and only if $(I, a) \rele \sigma_2 \ \forall a\in A_I$. (A symmetric statement holds for $I \in \mathcal{I}_2$ and $\sigma_1\in\Sigma_1$.)
\end{itemize}

    \begin{algorithm}[ht]
        \caption{Pseudocode for the algorithm of \cref{sec:unroll algo}.}
        \label{algo:main}
        \begin{minipage}{0.99\linewidth}
            \begin{algorithmic}[1]
              \Function{Decompose}{$(\sigma_1,\sigma_2), \mathcal{S}, \mathcal{D}$}
              \State Let $i^* \in \{1,2\}$ be a critical player for $(\sigma_1,\sigma_2) \in \Sigma_1\times\Sigma_2$
              \State $\mathcal{I}^* \gets \{I \in \mathcal{I}_{i^*} : I \text{ is critical for } \sigma_{-i^*}\} \subseteq \mathcal{I}_{i^*}$
                \Statex \Comment\textcolor{gray}{By definition of critical player, $\mathcal{I}^*$ is either a singleton or an empty set}
              \For{\textbf{each} $I \in \mathcal{I}_{i^*}$ such that $\sigma(I) = \sigma_{i^*}$ and $\sigma_{-i}^* \rele I$}
                  \If{$i^* = 1$}
                    \State $\mathcal{S}' \gets \{((I,a), \sigma_2): a \in A_{I}\}$
                  \Else
                    \State $\mathcal{S}' \gets \{(\sigma_1, (I,a)): a \in A_{I}\}$
                  \EndIf
                  \State $\mathcal{D} \gets \mathcal{D} \ext^h \symp{|A_{I}|}$, where $h : \mathcal{D} \ni \vec{\xi} \mapsto \xi[\sigma_1,\sigma_2]$.\label{line:update D 1}
                    \Statex \Comment\textcolor{gray}{The $|A_{I}|$ new entries correspond to indices $\mathcal{S}'$}
                  \State $\mathcal{S} \gets \mathcal{S} \sqcup \mathcal{S}'$\label{line:update S 1}
                  \For{\textbf{each} $a \in A_{I}$}
                    \If{$i^* = 1$}
                      \State $(\mathcal{S}, \mathcal{D}) \gets \textsc{Decompose}(((I,a),\sigma_2), \mathcal{S}, \mathcal{D})$ \Comment\textcolor{gray}{Recursive step}\label{line:update S 2}\label{line:update D 2}
                    \Else
                      \State $(\mathcal{S}, \mathcal{D}) \gets \textsc{Decompose}((\sigma_1,(I, a)), \mathcal{S}, \mathcal{D})$ \Comment\textcolor{gray}{Recursive step}\label{line:update S 3}\label{line:update D 3}
                    \EndIf
                  \EndFor
            \EndFor
                  \For{\textbf{each} $J \in \mathcal{I}_{-i^*}$ such that $\sigma(J) \succeq \sigma_{-i^*}$ and $\sigma_{i^*} \rele J$, in \emph{top-down order},}
                    \If{$\mathcal{I}^* = \{I^*\}$ for some $I^* \in \mathcal{I}_{i^*}$ \textbf{and} $I^* \conn J$}\label{line:if}
                        \For{\textbf{each} $a \in A_J$}
                            \If{$i^* = 1$}
                                \State $\mathcal{S}' \gets \{(\sigma_1, (J,a))\}$
                          \State $\mathcal{D} \gets \mathcal{D} \ext^h \{1\}$, where $h : \mathcal{D} \ni \vec{\xi} \mapsto \sum_{a^* \in A_{I^*}}\xi[(I^*, a^*),(J,a)]$\label{line:update D 4}
                            \Statex\Comment\textcolor{gray}{The new entry corresponds to index $\mathcal{S}'$}
                            \Else
                                \State $\mathcal{S}' \gets \{((J,a), \sigma_2)\}$
                              \State $\mathcal{D} \gets \mathcal{D} \ext^h \{1\}$, where $h : \mathcal{D} \ni \vec{\xi} \mapsto \sum_{a^* \in A_{I^*}}\xi[(J,a), (I^*, a^*)]$\label{line:update D 5}
                                \Statex \Comment\textcolor{gray}{The new entry corresponds to index $\mathcal{S}'$}
                            \EndIf
                            \State $\mathcal{S} \gets \mathcal{S} \sqcup \mathcal{S}'$\label{line:update S 4}
                        \EndFor
                    \Else
                      \If{$i^* = 1$}
                        \State $\mathcal{S}' \gets \{(\sigma_1, (J, a)): a \in A_J\}$
                          \State $\mathcal{D} \gets \mathcal{D} \ext^h \symp{|A_J|}$, where $h : \mathcal{D} \ni \vec{\xi} \mapsto \xi[\sigma_1,\sigma(J)]$
                            \Statex \Comment\textcolor{gray}{The $|A_J|$ new entries\label{line:update D 6} correspond to indices $\mathcal{S}'$}
                      \Else
                        \State $\mathcal{S}' \gets \{((J,a), \sigma_2): a \in A_J\}$
                          \State $\mathcal{D} \gets \mathcal{D} \ext^h \symp{|A_J|}$, where $h : \mathcal{D} \ni \vec{\xi} \mapsto \xi[\sigma(J),\sigma_2]$
                            \Statex \Comment\textcolor{gray}{The $|A_J|$ new entries\label{line:update D 7} correspond to indices $\mathcal{S}'$}
                      \EndIf
                      \State $\mathcal{S} \gets \mathcal{S} \sqcup \mathcal{S'}$\label{line:update S 5}
                    \EndIf
                  \EndFor
              \State \textbf{return} $(\mathcal{S}, \mathcal{D})$
              \EndFunction
            \end{algorithmic}
        \end{minipage}
    \end{algorithm}

As stated in \cref{sec:unroll algo}, the outermost call to \textsc{Decompose} is $\textsc{Decompose}((\emptyseq,\emptyseq), \mathcal{S} = \{(\emptyseq,\emptyseq)\}, \mathcal{D} = \{ (1) \})$, which corresponds to the starting situation in which only the entry $\xi[\emptyseq,\emptyseq]$ has been filled in (with the value 1 as per \cref{def:xi}).
The correctness of the algorithm relies fundamentally on the following inductive contract:
\begin{lemma}[Inductive contract]\label{lem:contract}
  At the beginning of each call to $\textsc{Decompose}((\sigma_1,\sigma_2), \mathcal{S}, \mathcal{D})$,
\begin{enumerate}[leftmargin=1.5cm,nolistsep,itemsep=1mm,label=(Pre\arabic*)]
  \item $\mathcal{S}$ contains only relevant sequence pairs.
  \item $\mathcal{D}$ consists of vectors indexed by exactly the indices in $\mathcal{S}$.
  \item $\mathcal{S}$ does not contain any relevant sequence pairs which are descendants of $(\sigma_1,\sigma_2)$, with the only exception of $(\sigma_1,\sigma_2)$ itself. In formulas, \[\mathcal{S} \cap \{(\sigma'_1,\sigma'_2) \in \Sigma_1\times\Sigma_2 : \sigma'_1 \succeq \sigma_1, \sigma'_2 \succeq \sigma_2\} = \{(\sigma_1, \sigma_2)\}.\]
\end{enumerate}

  At the end of the call, the return value $(\mathcal{S}', \mathcal{D}')$ is such that
\begin{enumerate}[leftmargin=1.5cm,nolistsep,itemsep=1mm,label=(Post\arabic*)]
\item $\mathcal{S}'$ contains only relevant sequence pairs.
\item $\mathcal{D}'$ consists of vectors $\xi$ indexed by exactly the indices in $\mathcal{S}'$.
\item The call has filled in exactly all relevant sequence pair indices that are descendants of $(\sigma_1,\sigma_2)$ (except for $(\sigma_1,\sigma_2)$ itself, which was already filled in). In formulas, \[\mathcal{S}' = \mathcal{S} \sqcup \{(\sigma'_1,\sigma'_2) \in \Sigma_1\times\Sigma_2 : \sigma'_1 \succeq \sigma_1, \sigma'_2 \succeq \sigma_2, (\sigma'_1,\sigma'_2) \neq (\sigma_1,\sigma_2), \sigma'_1 \rele \sigma'_2\}.\]
\item $\mathcal{D}'$ satisfies a subset of constraints of \cref{def:xi}:
\end{enumerate}
    \begin{equation*}
  \mathcal{D}' \subseteq \left\{\vec{\xi}\ge\vec{0}: \!\!\begin{array}{ll}
    \circled{1}\ \!\displaystyle\sum_{a \in A_I}~\!\xi[(I, a),\hspace{.1mm} \sigma'_2] = \xi[\sigma(I),\hspace{.4mm} \sigma'_2] & \forall \sigma'_2 \succeq \sigma_2, I \in \mathcal{I}_1 \ \hspace{.0mm} \text{ s.t. } \sigma'_2 \rele I, \sigma(I) \succeq \sigma_1\\[8mm]
    \circled{2}\ \!\displaystyle\sum_{a \in A_J} \xi[\sigma'_1, (J, a)] = \xi[\sigma'_1, \sigma(J)] & \forall \sigma'_1\succeq\sigma_1, J \in \mathcal{I}_2 \text{ s.t. } \sigma'_1 \rele J, \sigma(J) \succeq \sigma_2
  \end{array}\!\!\!\right\}\!.
  \end{equation*}
\end{lemma}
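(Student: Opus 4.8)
The plan is to prove \cref{lem:contract} by strong induction on the number of relevant sequence pairs that are \emph{strict} descendants of $(\sigma_1,\sigma_2)$; this is a well-founded measure because every recursive call is made on a pair $((I,a),\sigma_2)$ or $(\sigma_1,(I,a))$ that is a strict descendant of $(\sigma_1,\sigma_2)$ (since $\sigma(I)=\sigma_{i^*}$ forces $(I,a)\succ\sigma_{i^*}$), and hence has strictly fewer descendants of its own. Before starting the induction I would isolate one purely combinatorial fact that the rest of the argument leans on repeatedly: \emph{connectedness is monotone under passing to ancestor information sets}, i.e.\ if $I_1\conn I_2$ and $\hat I_1$ is an information set of the same player lying on the path to $I_1$ (so that $\sigma(I_1)\succeq(\hat I_1,\hat a)$ for some $\hat a$), then $\hat I_1\conn I_2$ (and symmetrically on the Player~2 side). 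This follows from a short lowest-common-ancestor argument on the two witnessing nodes, using only perfect recall.

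With the induction hypothesis available for every recursive call, I would first verify that the preconditions are re-established before each recursive invocation and that the index accounting (Post1--Post3) is correct. Pre1 and Pre2 are immediate, since each scaled extension appends exactly the indices in the $\mathcal{S}'$ sets declared on the corresponding lines. For Pre3 and Post3 I would argue, using the monotonicity lemma, that (i) the first loop together with its recursive calls fills exactly the relevant pairs with $\sigma_1'\succ\sigma_1$: restricting that loop to information sets $I$ with $I\rele\sigma_2$ loses nothing, because any relevant descendant with first coordinate $\succeq(I,a)$ forces $I\rele\sigma_2$ by monotonicity; and (ii) the last loop fills exactly the relevant pairs with $\sigma_1'=\sigma_1,\ \sigma_2'\succ\sigma_2$, since $(\sigma_1,(J,a))$ is relevant iff $\sigma_1\rele J$. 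Disjointness of the $\sqcup$ operations (and hence the restoration of Pre3 before each recursive call) follows because distinct first-coordinate subtrees are index-disjoint and the last loop only touches indices with first coordinate $\sigma_1$, which Pre3 guarantees are still empty.

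The heart of the proof is Post4. Each scaled extension directly realizes one equality: the first loop enforces $\circled{1}$ at the current level, the splitting branch of the last loop enforces $\circled{2}$ at the current level, and the recursive calls supply, by the induction hypothesis, every $\circled{1}$/$\circled{2}$ constraint attached to strictly deeper sequences. The delicate cases are the \emph{summing} branches, where $\xi[\sigma_1,(J,a)]$ is defined as $\sum_{a^*\in A_{I^*}}\xi[(I^*,a^*),(J,a)]$ rather than split, so that the corresponding constraint $\circled{2}$, namely $\sum_{a\in A_J}\xi[\sigma_1,(J,a)]=\xi[\sigma_1,\sigma(J)]$, must be checked to hold \emph{automatically}. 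I would verify this by the telescoping identity
\begin{align*}
  \sum_{a\in A_J}\xi[\sigma_1,(J,a)]
  &= \sum_{a\in A_J}\sum_{a^*\in A_{I^*}}\xi[(I^*,a^*),(J,a)]\\
  &= \sum_{a^*\in A_{I^*}}\sum_{a\in A_J}\xi[(I^*,a^*),(J,a)]\\
  &= \sum_{a^*\in A_{I^*}}\xi[(I^*,a^*),\sigma(J)]
   = \xi[\sigma_1,\sigma(J)],
\end{align*}
where the penultimate equality is the $\circled{2}$ constraint at $(I^*,a^*)$ furnished by the recursive call on $((I^*,a^*),\sigma_2)$ (valid since $I^*\conn J$ gives $(I^*,a^*)\rele J$), and the final equality is the $\circled{1}$ constraint at second coordinate $\sigma(J)$.

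The main obstacle, and the place where \cref{prop:unfavorable} is indispensable, is justifying that final equality $\sum_{a^*}\xi[(I^*,a^*),\sigma(J)]=\xi[\sigma_1,\sigma(J)]$ for a possibly-deep second coordinate $\sigma(J)$. When $\sigma(J)=\sigma_2$ this is the base $\circled{1}$ constraint established by the first loop; when $\sigma(J)\succ\sigma_2$ it must itself have been produced by an \emph{earlier} summing step of the same loop, which is precisely why processing $J$ in top-down order is essential. To make this a closing induction, I would show via the monotonicity lemma that if $J\conn I^*$ then the information set indexing $\sigma(J)$ is also connected to $I^*$, so it too was handled by a summing branch; and via \cref{prop:unfavorable} (uniqueness of the critical information set) that no \emph{non}-critical $I$ at the current level can acquire a constraint at a strictly deeper second coordinate, since such a constraint would exhibit a second critical information set and contradict criticality. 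Together these confine all deeper $\circled{1}$ equalities at the current Player~1 level to the single critical set $I^*$ and arrange them into a chain grounded at $\sigma_2$, completing Post4. The remaining verifications---nonnegativity and affinity of each $h$, and that each $h$ reads only already-filled entries---are routine coordinate bookkeeping.
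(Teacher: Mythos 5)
Your proposal is correct and follows essentially the same route as the paper's proof: induction over the recursion tree with the same index bookkeeping for (Pre1)--(Post3), and for (Post4) the identical telescoping identity $\sum_{a\in A_J}\xi[\sigma_1,(J,a)] = \sum_{a^*\in A_{I^*}}\sum_{a\in A_J}\xi[(I^*,a^*),(J,a)] = \xi[\sigma_1,\sigma(J)]$, grounded in the top-down processing order and \cref{prop:unfavorable}. The only difference is presentational: you isolate as explicit lemmas the monotonicity of connectedness under passing to ancestor information sets and the fact that only the critical information set $I^*$ can carry constraints at strictly deeper second coordinates, steps the paper uses implicitly (e.g., in the observation that $I^* \conn J$ implies $\sigma(J) \rele I^*$).
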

\begin{proof}~
  \begin{itemize}
    \item \emph{(Pre1) and (Post1)}. We prove that if $\mathcal{S}$ only contains relevant sequence pairs at the beginning of the call, then the returned $\mathcal{S}'$ only contains relevant sequence pairs. The proof is by induction on the call tree, where the base case is any call where $(\sigma_1,\sigma_2)$ is such that $\{\sigma'_1 \succeq \sigma_1 : \sigma'_1 \rele \sigma_2\} = \emptyset$ and $\{\sigma'_2 \succeq \sigma_2 : \sigma_1 \rele \sigma'_2\} = \emptyset$, for which no further call to \textsc{Decompose} is performed. The only updates to $\mathcal{S}$ happen at Lines~\ref{line:update S 1},~\ref{line:update S 2},~\ref{line:update S 3},~\ref{line:update S 4} and~\ref{line:update S 5}:
        \begin{itemize}
          \item Line~\ref{line:update S 1}. Since $\sigma_{-i^*} \rele I$, by definition $\mathcal{S}'$ contains relevant sequence pairs.
          \item Lines~\ref{line:update S 2} and~\ref{line:update S 3}. Follows by the inductive step.
          \item Lines~\ref{line:update S 4} and~\ref{line:update S 5}. Since $\sigma_{i^*} \rele J$, by definition $\mathcal{S}'$ contains relevant sequence pairs.
        \end{itemize}
    \item \emph{(Pre 2) and (Post 2)}. We prove that if $\mathcal{D}$ contains vectors indexed by exactly the indices in $\mathcal{S}$,then the returned $(\mathcal{S}', \mathcal{D}')$ is such that $\mathcal{D}'$ contains vectors indexed by exactly the indices in $\mathcal{S}$. Again, the proof is by induction on the call tree, where the base case is any call where $(\sigma_1,\sigma_2)$ is such that $\{\sigma'_1 \succ \sigma_1 : \sigma'_1 \rele \sigma_2\} = \emptyset$ and $\{\sigma'_2 \succ \sigma_2 : \sigma_1 \rele \sigma'_2\} = \emptyset$, for which no further call to \textsc{Decompose} is performed. The only updates to $\mathcal{D}$ happen at Lines~\ref{line:update D 1},~\ref{line:update D 2},~\ref{line:update D 3},~\ref{line:update D 4},~\ref{line:update D 5},~\ref{line:update D 6} and~\ref{line:update D 7}. All cases are trivial.
    \item \emph{(Pre 3)}. By induction on the call order. The base cases (initial call) follows since $\mathcal{S} = \{(\emptyseq,\emptyseq)\}$. Hence, it is enough to prove that Line~\ref{line:update S 1} maintains the property. Fix a call with parameter $(\sigma_1,\sigma_2)$. Note that because of the inductive hypothesis, we have
        \begin{equation}\label{eq:indu1}
        \mathcal{S} \cap \{(\sigma'_1,\sigma'_2) \in \Sigma_1\times\Sigma_2 : \sigma'_1 \succeq \sigma_1, \sigma'_2 \succeq \sigma_2\} = \{(\sigma_1, \sigma_2)\}.
        \end{equation}
        Assume without loss of generality that $i^* = 1$ (the other case is symmetric). Then, at each iteration of Line~\ref{line:update S 2}, from \eqref{eq:indu1} we have
        \begin{equation}\label{eq:indu2}
            \mathcal{S} \cap \{(\sigma'_1,\sigma'_2) \in \Sigma_1\times\Sigma_2 : \sigma'_1 \succeq (I,a), \sigma'_2 \succeq \sigma_2\} = \emptyset.
        \end{equation}

Since from Line~\ref{line:update S 1} $\mathcal{S}$ was updated by taking the union with
        \[
          \mathcal{S'} = \{((I,a), \sigma_2): a \in A_I\},
        \]
        then for all $a \in A_I$
        \begin{align*}
          &(\mathcal{S} \cup \mathcal{S'}) \cap \{(\sigma'_1,\sigma'_2) \in \Sigma_1\times\Sigma_2 : \sigma'_1 \succeq (I, a), \sigma'_2 \succeq \sigma_2 \} \\
  &\hspace{2cm}= \mathcal{S'} \cap \{(\sigma'_1,\sigma'_2) \in \Sigma_1\times\Sigma_2 : \sigma'_1 \succeq (I, a), \sigma'_2 \succeq \sigma_2\} = \{((I,a), \sigma_2)\}.
        \end{align*}
        We also note that (Pre3) implies that the disjoint unions of Lines~\ref{line:update S 1},~\ref{line:update S 4} and~\ref{line:update S 5} are effectively disjoint unions.

    \item \emph{(Post3)}. By induction on the call tree. The base case is any call where $(\sigma_1,\sigma_2)$ is such that $\{\sigma'_1 \succ \sigma_1 : \sigma'_1 \rele \sigma_2\} = \emptyset$ and $\{\sigma'_2 \succ \sigma_2 : \sigma_1 \rele \sigma'_2\} = \emptyset$, for which no further call to \textsc{Decompose} is performed. In that case, we see that the algorithm does not fill in any new index, and therefore the claim holds. Consider now a call with relevant sequence pair $(\sigma_1,\sigma_2)$. Given (Pre3), (Pre2) and (Post2), it suffices to see what indices are added to $\mathcal{S}$ at Lines~\ref{line:update S 1},~\ref{line:update S 2},~\ref{line:update S 3},~\ref{line:update S 4} and~\ref{line:update S 5}. Assume without loss of generality that $i^*=1$ (the other case is symmetric). Using the inductive hypothesis, we see that for each $I \in \mathcal{I}_{i^*}$ such that $\sigma(I) = \sigma_{i^*}$ and $\sigma_{-i^*} \rele I$, Lines~\ref{line:update S 1},~\ref{line:update S 2} and~\ref{line:update S 3} add exactly indices \[\{(\sigma'_1,\sigma'_2) : \sigma'_1 \rele I, \sigma'_1 \succ \sigma_1, \sigma'_2 \succeq \sigma_2, \sigma'_1 \rele \sigma'_2\}.\] Taking the union over all $I \in \{ I \in \mathcal{I}_{i^*} : \sigma(I) = \sigma_{i^*}, \sigma_{-i^*} \rele I \}$, we see that the only indices missing from our target $\{(\sigma'_1,\sigma'_2) \in \Sigma_1\times\Sigma_2 : \sigma'_1 \succeq \sigma_1, \sigma'_2 \succeq \sigma_2, (\sigma'_1,\sigma'_2) \neq (\sigma_1,\sigma_2), \sigma'_1 \rele \sigma'_2\}$ are those from the set $\{(\sigma_1,\sigma'_2) \in \Sigma_1\times\Sigma_2 : \sigma'_2 \succ \sigma_2, \sigma_1 \rele \sigma'_2\}$. These are exactly the indices that correspond to all the sequences $(J, a): a \in A_J$, for all $J \in \mathcal{I}_{2},\sigma(J) \succeq \sigma_{2}, \sigma_1 \rele J$. These indices are filled in on \cref{line:update D 4,line:update D 5,line:update D 6,line:update D 7} and added to $\mathcal{S}$ on \cref{line:update S 4,line:update S 5}.

        The analysis so far also reveals that all the functions $h$ used in Lines~\ref{line:update D 1},~\ref{line:update D 2},~\ref{line:update D 3},~\ref{line:update D 4},~\ref{line:update D 5},~\ref{line:update D 6} and~\ref{line:update D 7} are well-defined.

    \item \emph{(Post4)} By induction on the call tree. The base case is any call where $(\sigma_1,\sigma_2)$ is such that $\{\sigma'_1 \succ \sigma_1 : \sigma'_1 \rele \sigma_2\} = \emptyset$ and $\{\sigma'_2 \succ \sigma_2 : \sigma_1 \rele \sigma'_2\} = \emptyset$, for which no further call to \textsc{Decompose} is performed. In that case, we see that the algorithm does not fill in any new index and that the set of constraints is empty, and therefore the claim holds. Assume without loss of generality that $i^*=1$ (the other case is symmetric). \cref{line:update D 1} guarantees that
        \begin{equation*}
            \bullet\ \ \!\displaystyle\sum_{a \in A_I}~\!\xi[(I, a),\hspace{.1mm} \sigma_2] = \xi[\sigma_1,\hspace{.4mm} \sigma_2] \hspace{2cm} \forall I \in \mathcal{I}_1 \text{ s.t. } \sigma_2 \rele I, \sigma(I) = \sigma_1.
        \end{equation*}
        By using the inductive hypothesis, \cref{line:update D 2} (and \cref{line:update D 3} in the case $i^* = 2$) guarantees that all the constraints
        \[
        \begin{array}{ll}
            \bullet\ \ \!\displaystyle\sum_{a \in A_I}~\!\xi[(I, a),\hspace{.1mm} \sigma'_2] = \xi[\sigma(I),\hspace{.4mm} \sigma'_2] & \forall \sigma'_2 \succeq \sigma_2, I \in \mathcal{I}_1 \ \hspace{.0mm} \text{ s.t. } \sigma'_2 \rele I, \sigma(I) \succ \sigma_1\\[5mm]
    \bullet\ \ \!\displaystyle\sum_{a \in A_J} \xi[\sigma'_1, (J, a)] = \xi[\sigma'_1, \sigma(J)] & \forall \sigma'_1\succ\sigma_1, J \in \mathcal{I}_2 \text{ s.t. } \sigma'_1 \rele J, \sigma(J) \succeq \sigma_2
        \end{array}
        \]
        are satisfied (note the strict $\succ$). Hence, it is enough to prove that Lines~\ref{line:update D 4} and~\ref{line:update D 6} fill in all indices $\{(\sigma_1, \sigma'_2): \sigma'_2 \succ \sigma_2, \sigma_1 \rele \sigma'_2\}$ in such a way that all constraints
\[
\begin{array}{ll}
        \circled{A}\ \ \!\displaystyle\sum_{a \in A_J} \xi[\sigma_1, (J, a)] = \xi[\sigma_1, \sigma(J)] & \forall J \in \mathcal{I}_2 \text{ s.t. } \sigma_1 \rele J, \sigma(J) \succeq \sigma_2\\[5mm]
            \circled{B}\ \ \!\displaystyle\sum_{a \in A_I}~\!\xi[(I, a),\hspace{.1mm} \sigma'_2] = \xi[\sigma_1,\hspace{.4mm} \sigma'_2] & \forall \sigma'_2 \succ \sigma_2, I \in \mathcal{I}_1 \ \hspace{.0mm} \text{ s.t. } \sigma'_2 \rele I, \sigma(I) = \sigma_1
        \end{array}
\]
    hold.
    To this end, note that Line~\ref{line:update D 4} (and~\ref{line:update D 5} in the case $i^* = 2$) guarantees constraints \circled{B}, while Line~\ref{line:update D 6} (and~\ref{line:update D 7} in the case $i^* = 2$) guarantees constraints \circled{A} for all $J$ such that the condition of the \textbf{if} statement on Line~\ref{line:if} is \emph{not} met. Hence, it is enough to show that for all $J$ such that the condition of the \textbf{if} statement on Line~\ref{line:if} is met, constraint \circled{A} holds. This is easy to show: for any such $J$, all the entries of $\xi[\sigma_1, (J,a)]$ ($a\in A_J$) were filled in Line~\ref{line:update D 4}, and thus we have
\begin{align*}
  \sum_{a \in A_J} \xi[\sigma_1, (J,a)] &= \sum_{a\in A_J} \sum_{b \in A_{I^*}} \xi[(I^*, b), (J,a)] \\
  &= \sum_{b\in A_{I^*}}\sum_{a\in A_J} \xi[(I^*, b), (J,a)] \\
  &= \sum_{b\in A_{I^*}} \xi[(I^*, b), \sigma(J)]\\
  &= \xi[\sigma(I^*), \sigma(J)] = \xi[\sigma_1, \sigma(J)],
\end{align*}
  where the second-to-last equality comes from the observation that $I^* \conn J$ implies $\sigma(J) \rele I^*$. This concludes the proof. \qedhere
  \end{itemize}
\end{proof}

Conditions (Post2), (Post3) and (Post4) in \cref{lem:contract} together imply that when the algorithm terminates (with return value $(\mathcal{S}',\mathcal{D}')$), $\mathcal{S}'$ is the set of relevant sequences in the game, and that $\mathcal{D}' \subseteq \Xi$. Hence, it is enough to show that $\mathcal{D}' \supseteq \Xi$ to conclude the proof of correctness of the algorithm:

\begin{lemma}
  At termination, the set $\mathcal{D}$ returned by the algorithm is such that $\Xi \subseteq \mathcal{D}$.
\end{lemma}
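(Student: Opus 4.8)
The plan is to fix an arbitrary $\vec{\xi} \in \Xi$ and show that $\vec{\xi}$ appears in the set $\mathcal{D}$ returned by the outermost call. By (Post3) the final $\mathcal{S}$ is exactly the set of all relevant sequence pairs, and by (Post2) every vector in $\mathcal{D}$ is indexed by exactly those pairs; hence it suffices to track, at every stage of the recursion, the restriction of $\vec{\xi}$ to the indices currently recorded in $\mathcal{S}$ (call it $\vec{\xi}_{\mathcal{S}}$) and prove it stays inside $\mathcal{D}$. I would strengthen this into the following inductive invariant, proved by induction on the call tree exactly as in \cref{lem:contract}: \emph{if $\vec{\xi}_{\mathcal{S}} \in \mathcal{D}$ at the beginning of a call $\textsc{Decompose}((\sigma_1,\sigma_2),\mathcal{S},\mathcal{D})$, then $\vec{\xi}_{\mathcal{S}'} \in \mathcal{D}'$ at the end.} Since every modification to $\mathcal{D}$ is a scaled extension, membership is preserved as long as the freshly filled-in block of $\vec{\xi}$ is a legal choice in the scaled factor, so the whole argument reduces to checking each of the two extension types used by the algorithm.

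For the \emph{splitting} extensions (\cref{line:update D 1,line:update D 6,line:update D 7}, of the form $\mathcal{D} \ext^h \symp{s}$ with $h$ extracting an already-present entry such as $\xi[\sigma_1,\sigma_2]$ or $\xi[\sigma_1,\sigma(J)]$), the scaled factor $h(\vec{\xi}_{\mathcal{S}})\,\symp{s}$ consists of nonnegative vectors summing to that entry. The corresponding block of $\vec{\xi}$ is nonnegative because $\vec{\xi}\ge\vec 0$, and it sums to the parent entry precisely by one of the probability-mass-conservation constraints of \cref{def:xi}; that constraint is in force exactly because the guard under which the algorithm performs the extension ($\sigma_{-i^*}\rele I$, resp.\ $\sigma_{i^*}\rele J$) is the relevance condition attached to it. (The degenerate case $h(\vec{\xi}_{\mathcal{S}})=0$ is automatic, since $0\cdot\symp{s}=\{\vec 0\}$ and a nonnegative block summing to $0$ must vanish.) For the \emph{summing} extensions (\cref{line:update D 4,line:update D 5}, of the form $\mathcal{D}\ext^h\{1\}$), the scaled factor forces the one new coordinate to equal $h(\vec{\xi}_{\mathcal{S}})=\sum_{a^*\in A_{I^*}}\xi[(I^*,a^*),(J,a)]$; I must check that the value $\vec{\xi}$ assigns at the newly created index, namely $\xi[\sigma_1,(J,a)]$, equals this sum. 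This is exactly the constraint of \cref{def:xi} instantiated with $I_1 = I^*$ (so $\sigma(I^*)=\sigma_1$) and $\sigma_2 = (J,a)$, which applies because the \textbf{if}-guard on \cref{line:if} supplies $I^*\conn J$, i.e.\ $I^*\rele (J,a)$.

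The base case of the induction is any leaf call, which touches neither $\mathcal{S}$ nor $\mathcal{D}$, so the invariant holds vacuously; the recursive steps propagate it via the inductive hypothesis, using (Pre2)/(Post2) to guarantee that $\vec{\xi}_{\mathcal{S}}$ is well defined at every point. For the outermost call we have $\mathcal{S}=\{(\emptyseq,\emptyseq)\}$, $\mathcal{D}=\{(1)\}$, and $\xi[\emptyseq,\emptyseq]=1$, so the hypothesis $\vec{\xi}_{\mathcal{S}}=(1)\in\mathcal{D}$ is met and the invariant yields $\vec{\xi}\in\mathcal{D}'$ at termination; as $\vec{\xi}\in\Xi$ was arbitrary, $\Xi\subseteq\mathcal{D}$. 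I expect the one genuinely delicate point to be the summing step: there a coordinate is \emph{determined} rather than freely chosen, so correctness hinges on the ``redundant constraint'' phenomenon highlighted in the worked example of \cref{sec:example unroll}---the determined value must coincide with what $\vec{\xi}$ already carries, and this is guaranteed only because the connectedness guard $I^*\conn J$ makes the matching constraint of \cref{def:xi} active. Everything else is routine bookkeeping, matching each simplex extension to its sum-to-parent constraint.
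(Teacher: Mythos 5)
Your proposal is correct and follows essentially the same route as the paper: the paper's proof is exactly your casewise verification that each scaled-extension update admits the corresponding block of $\vec{\xi}$---splitting steps via the mass-conservation constraints of \cref{def:xi} (active because of the relevance guards), and the summing step via the constraint instantiated at $I_1 = I^*$, $\sigma_2 = (J,a)$, which is in force precisely because the guard $I^*\conn J$ holds. Your explicit inductive invariant over the call tree, and the handling of the degenerate case $h(\vec{\xi}_{\mathcal{S}})=0$, merely make rigorous the bookkeeping the paper leaves implicit in its ``rather straightforward'' check.
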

  We need to prove that any $\vec{\xi}$ that satisfies the constraints of~\cref{def:xi} appears in $\mathcal{D}$. This is rather straightforward (we focus on the updates to $\mathcal{D}$ relevant to $i^* = 1$ only---the other case is symmetric):
\begin{itemize}
  \item Line~\ref{line:update D 1}. According to the constraints in~\ref{def:xi}, it must be
      $
        \sum_{a \in A_I} \xi[(I,a),\sigma_2] = \xi[\sigma_1, \sigma_2].
      $
      Hence, the vector of entries $(\xi[(I,a),\sigma_2])_{a\in A_I}$ indeed belongs to $\xi[\sigma_1,\sigma_2] \Delta^{|A_I|}$.
  \item Line~\ref{line:update D 4}. According to the constraints in~\ref{def:xi}, and since $I^* \conn J$, it must be for all $J$ in the iteration and for all $a\in A_J$:
      $
        \sum_{a^* \in A_{I^*}} \xi[(I^*,a^*),(J,a)] = \xi[\sigma(I^*), (J,a)] = \xi[\sigma_1, (J,a)].
      $
      Hence, $\{\xi[\sigma_1, (J,a)]\} \in \big(\sum_{a^* \in A_{I^*}} \xi[(I^*,a^*),(J,a)]\big) \{ 1 \}$.
  \item Line~\ref{line:update D 6}. According to the constraints in~\ref{def:xi}, for all $J$ in the iteration it must be
      $
        \sum_{a \in A_J} \xi[\sigma_1, (J,a)] = \xi[\sigma_1, \sigma(J)].
      $
      Hence, the vector of entries $(\xi[\sigma_1, (J,a)])_{a\in A_J}$ indeed belongs to $\xi[\sigma_1,\sigma(J)] \Delta^{|A_J|}$. \qedhere
\end{itemize} 
    \section{Regret Minimization Algorithm for $\Xi$}
\label{app:full rm algo}

In this section, we give the pseudocode of our regret minimization algorithm for the space $\Xi$ of correlated strategies in a two-player extensive-form game with no chance moves.
We use the following notation:
\begin{itemize}
  \item $\textsc{FillSimplex}((\sigma_1, \sigma_2) \to I)$, where $I \in \mathcal{I}_1$ (respectively, $I \in \mathcal{I}_2$) is such that $\sigma(I) = \sigma_1$ (respectively, $\sigma(I) = \sigma_2$) corresponds to filling in all entries $\{\xi[(I, a), \sigma_2]: a \in A_{I}\}$ (respectively, $\{\xi[\sigma_1, (I, a)]: a \in A_I\}$) given $\xi[\sigma_1, \sigma_2]$. This can be expressed via the scaled extension $\mathcal{D} \ext^h \Delta^{|A_{I}|}$, where $h: \vec{\xi} \mapsto \xi[\sigma_1, \sigma_2]$ as already discussed in the body of the paper.
  \item $\textsc{SumSimplex}((\sigma_1, \sigma_2) \gets I)$, where $I \in \mathcal{I}_1$ (respectively, $I \in \mathcal{I}_2$) is such that $\sigma(I) = \sigma_1$ (respectively, $\sigma(I) = \sigma_2$) corresponds to filling in $\xi[\sigma_1, \sigma_2]$ by assigning it to the sum $\sum_{a\in A_I} \xi[(I, a), \sigma_2]$ (respectively, $\sum_{a\in A_I} \xi[\sigma_1, (I,a)]$).
\end{itemize}

\begin{algorithm}[ht]
  \caption{Regret minimizer for the correlated strategy polytope $\Xi$ in a two-player game with no chance moves}\label{algo:full rm}
  \begin{algorithmic}[1]
    \Function{Recommend}{~\!}
        \State $\vec{\xi} \gets \vec{0} \in \bbR^{|\Xi|}$\Comment{\textcolor{gray}{$|\Xi|$ is the number of relevant sequence pairs in the game}}
        \For{$i=1,\dots,n$}
            \If{$\textsf{op}_i = \textsc{FillSimplex}((\sigma_1, \sigma_2) \to I)$}
                \If{$I \in \mathcal{I}_1$} \Comment{\textcolor{gray}{$I$ belongs to Player 1}}
                    \State $(\vec{\xi}[(I, a), \sigma_2])_{a \in A_I} \gets \xi[\sigma_1, \sigma_2] \cdot \text{RM}_i.\textsc{Recommend}()$
                \Else
                    \State $(\vec{\xi}[\sigma_1, (I, a)])_{a \in A_I} \gets \xi[\sigma_1, \sigma_2] \cdot \text{RM}_i.\textsc{Recommend}()$
                \EndIf
            \ElsIf{$\textsf{op}_i = \textsc{SumSimplex}((\sigma_1, \sigma_2) \gets I)$}
                \If{$I \in \mathcal{I}_1$} \Comment{\textcolor{gray}{$I$ belongs to Player 1}}
                    \State $\xi[\sigma_1, \sigma_2] \gets \sum_{a\in A_I} \xi[(I, a), \sigma_2]$
                \Else
                    \State $\xi[\sigma_1, \sigma_2] \gets \sum_{a\in A_I} \xi[\sigma_1, (I, a)]$
                \EndIf
            \EndIf
        \EndFor
    \EndFunction
  \end{algorithmic}
  \begin{algorithmic}[1]
    \Function{ObserveLoss}{$\vec{\ell} \in \bbR^{|\Xi|}$}
        \State $\vec{\xi} \gets \vec{0}$
        \For{$i=n,\dots,1$}
            \If{$\textsf{op}_i = \textsc{FillSimplex}((\sigma_1, \sigma_2) \to I)$}
                \State $\vec{y} \gets \text{RM}_i.\textsc{Recommend}()$
                \If{$I \in \mathcal{I}_1$} \Comment{\textcolor{gray}{$I$ belongs to Player 1}}
                    \State $v \gets \sum_{a\in A_I} y[a]\cdot \ell[(I, a), \sigma_2]$
                    \State $\text{RM}_i.\textsc{ObserveLoss}((\vec{\ell}[(I, a), \sigma_2])_{a\in A_I})$
                \Else
                    \State $v \gets \sum_{a\in A_I} y[a]\cdot \ell[\sigma_1, (I, a)]$
                    \State $\text{RM}_i.\textsc{ObserveLoss}((\vec{\ell}[\sigma_1, (I, a)])_{a\in A_I})$
                \EndIf
                \State $\ell[\sigma_1, \sigma_2] \gets \ell[\sigma_1, \sigma_2] + v$
            \ElsIf{$\textsf{op}_i = \textsc{SumSimplex}((\sigma_1, \sigma_2) \gets I)$}
                \If{$I \in \mathcal{I}_1$} \Comment{\textcolor{gray}{$I$ belongs to Player 1}}
                    \For{$a\in A_I$}
                        \State $\ell[(I,a),\sigma_2] \gets \ell[(I,a),\sigma_2] + \ell[\sigma_1,\sigma_2]$
                    \EndFor
                \Else
                    \For{$a\in A_I$}
                        \State $\ell[\sigma_1,(I,a)] \gets \ell[\sigma_1, (I,a)] + \ell[\sigma_1,\sigma_2]$
                    \EndFor
                \EndIf
            \EndIf
        \EndFor
    \EndFunction
  \end{algorithmic}
\end{algorithm}

The pseudocode is given in \cref{algo:full rm}. \cref{line:update D 1,line:update D 6,line:update D 7} in \cref{algo:main} correspond to \textsc{FillSimplex} operations, while \cref{line:update D 4,line:update D 5} correspond to \textsc{SumSimplex} operations. With this notation, we can rewrite the decomposition of $\Xi$ in \cref{thm:decomposition} as a list of \textsc{FillSimplex} and \textsc{SumSimplex} operations:
\[
  \Xi = \textsf{op}_1\,.\,\textsf{op}_2\,.\,\cdots\,.\,\textsf{op}_n.
\]
In fact, our \cref{algo:full rm} operates on this representation. It simply corresponds to applying \cref{algo:rm} on the chin of operations, recursively. The algorithm is contingent on a choice of ``local'' regret minimizers $\text{RM}_i$ for each of the simplex domains $\Delta^{|A_I|}$ for each of the $\textsc{FillSimplex}(\sigma_1,\sigma_2)\to I)$ operations.

It's immediate to see that both \textsc{Recomment} and \textsc{ObserveLoss} run in linear time (in $\|\Xi\|$, that is, in the number of relevant sequence pairs in the game), provided that all ``local'' regret minimizers $\text{RM}_i$ run in linear time in the size of their respective simplexes. Furthermore, from \cref{prop:regret bound} we find that the regret of \cref{algo:full rm} is upper bounded by the sum of all regrets cumulated by the local regret minimizers $\text{RM}_i$. Putting these facts together, we conclude the following: 

\thmrmproperties*
    \section{Subgradient Descent Technique of~\citet{Farina19:Correlation}}\label{app:subgradient}

\cref{fig:subgradient infeasibility} shows the infeasibility of the iterates produced by the subgradient descent technique of~\citet{Farina19:Correlation}. It complements~\cref{fig:results}.

\begin{figure}[ht]
  \raisebox{1.8px}{\includegraphics[scale=.89]{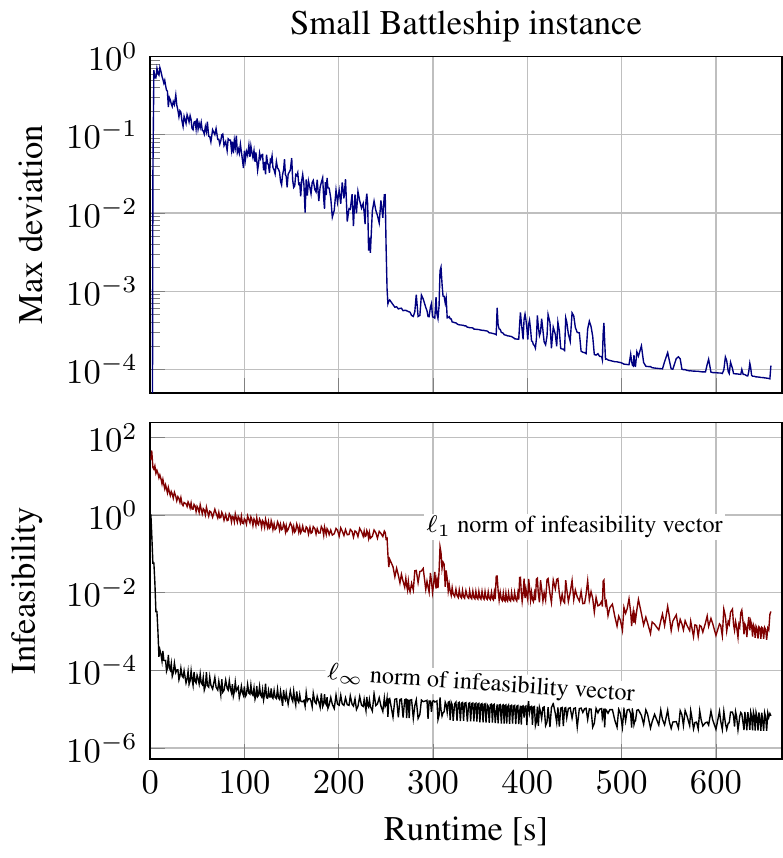}}
  \,
  \includegraphics[scale=.89]{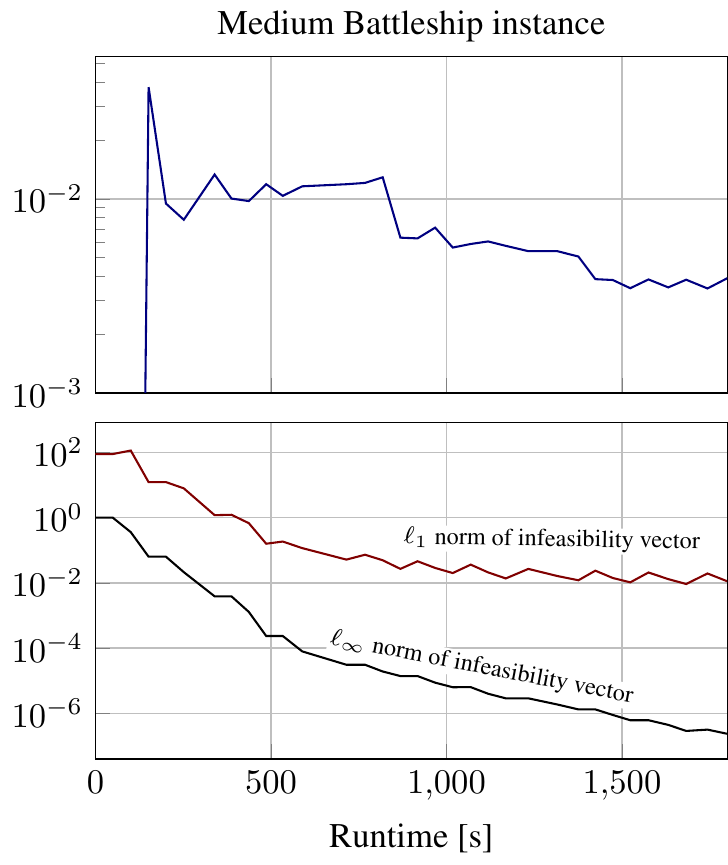}
  \caption{Infeasibility of the iterates produced by the subgradient descent technique of~\citet{Farina19:Correlation}. The infeasibility vector of an iterate $\vec{\xi}$ is defined as the vector of absolute differences between the left-hand and right-hand sides of all constraints that define $\Xi$ (\cref{def:xi}).}
  \label{fig:subgradient infeasibility}
\end{figure} 
\fi
\end{document}